\long\def\remove#1{}
\newtheorem{theorem}{Theorem}
\newtheorem{proposition}[theorem]{Proposition}
\newtheorem{newremark}{Remark}[section]
\newtheorem{observation}{Observation}[section]
\definecolor{darkblue}{rgb}{0.0, 0.0, 0.8}
\definecolor{darkred}{rgb}{0.8, 0.0, 0.0}
\definecolor{darkgreen}{rgb}{0.0, 0.8, 0.0}
\newcommand{\Mod}[1]{(\mathrm{mod}\ #1)}
\newcommand {\mm}[1] {\ifmmode{#1}\else{\mbox{\(#1\)}}\fi}
\newcommand{\real}     {\mathbb{R}}
\newcommand{\Z}                 {\mathrm {\mathbb{Z}}}
\newcommand{\lev}		{{\mathcal L}}
\newcommand{\reeb}		{{\mathrm Rb}}
\newcommand{\sublev}		{{\mathcal {SL}}}
\newcommand{\R}                {\mathrm {\sf{R}}}
\newcommand{\K}         {{\cal K}}
\newcommand{\F}         {{\cal F}}
\newcommand{\ff}                {\mathbf{\kappa}}
\newcommand{\lst}            {\mm {\rm list}}
\newcommand{\tree}               {T}
\newcommand{\prev}               {\rm prev}
\newcommand{\nxt}                {\rm next}
\newcommand{\tri}                {\rm tri}
\newcommand{\edgt}                {\rm tail}
\newcommand{\edgh}                {\rm head}
\newcommand{\essential}            {\rm prim}
\newcommand{\bool}            {\rm bool}
\newcommand{\tru}            {\rm true}
\newcommand{\fal}            {\rm false}
\newcommand{\barcode}            {\rm barcode}
\newcommand{\type}            {\rm type}
\newcommand{\ftag}            {\rm fintag}
\newcommand{\height}            {\rm height}
\newcommand{\bone}            {\rm b}
\newcommand{\btree}            {\rm parent}
\newcommand{\homo}      {{\sf H}}
\newcommand{\opbarhom}      {\accentset{\circ}{\sf B}}
\newcommand{\clbarhom}      {\bar{\sf B}}
\newcommand{\VecU}{\mathbb{U}}
\newcommand{\VecV}{\mathbb{V}}
\newcommand{\tspace}          {{\mathsf{X}}}
\title{Computing Height Persistence and Homology Generators
	in $\mathbb{R}^3$ Efficiently}
\author{Tamal K. Dey\thanks{Department of Computer Science and Engineering, The Ohio State University. \texttt{tamaldey@cse.ohio-state.edu}}}
\date{}
\begin{document}

\maketitle

%\newpage
%\setcounter{page}{1}
%\linenumbers
\begin{abstract}
	Recently it has been shown that computing the dimension of the
	first homology group $\homo_1(\K)$ of a simplicial $2$-complex $\K$
	embedded linearly in
	$\mathbb{R}^4$ is as hard as computing the rank of a sparse $0-1$
	matrix. This puts a major roadblock to computing persistence and 
	a homology basis (generators) for complexes embedded in $\mathbb{R}^4$
	and beyond in less than quadratic or even near-quadratic time. 
	But, what about dimension three?
	It is known that when $\K$ is a graph or a surface 
	with $n$ simplices linearly embedded in $\mathbb{R}^3$, the persistence for piecewise linear functions
	on $\K$ can be computed in $O(n\log n)$ time 
	and a set of generators of total size $k$
	can be computed in $O(n+k)$ time . However, 
	the question for general simplicial complexes $\K$
	linearly embedded in $\mathbb{R}^3$ is not completely settled.
	No algorithm with a complexity better than that of 
	the matrix multiplication 
	is known for this important case. 
	We show that the persistence for
	{\em height functions} on such complexes, 
	hence called {\em height persistence}, 
	can be computed in $O(n\log n)$ time. 
	This allows us to compute a basis (generators) of $\homo_i(\K)$, $i=1,2$,
	in $O(n\log n+k)$ time where $k$ is the size of the output. This improves significantly the current best bound of $O(n^{\omega})$, $\omega$ being the exponent of matrix multiplication.
	%Also, we observe that a slight modification of our method applied
	%to PL functions on graphs and simplicial complexes linearly embedded
	%in $\mathbb{R}^2$ provides an $O(n)$ algorithm for persistence
	%even when no explicit
	%vertex order is given. 
	We achieve these improved bounds by leveraging 
	recent results on zigzag persistence in computational topology,
	new observations about Reeb graphs,
	and some efficient geometric data structures.
\end{abstract}

\newpage
\setcounter{page}{1}
%\linenumbers
\section{Introduction}
\label{sec:intro}
Topological persistence for a filtration or a piecewise linear
function on a simplicial complex $\K$ is known to be computable
in $O(n^{\omega})$ time~\cite{MMS11} where $n$ is the number of simplices
in $\K$ and $\omega<2.373$ is the exponent of matrix multiplication. The question regarding the  
lower bound on its computation was largely open until 
Edelsbrunner and Parsa~\cite{EP14} showed that computing the rank 
of the first homology group $\homo_1(\K)$ of a simplicial complex
$\K$ linearly embedded in $\mathbb{R}^4$  
is as hard as the rank computation of a sparse $n\times n$  $0$-$1$ matrix.
The current upper bound for matrix rank computation is 
super-quadratic~\cite{CKL13} and lowering it is a well-recognized hard problem.
Consequently, computing the
dimension of the homology groups and hence the topological persistence
for functions on general complexes in better than super-quadratic 
time is difficult, if not impossible. 
But, what about the special cases that are still interesting?
The complexes embedded in three dimensions which arise
in plenty of applications present such cases.

It is easy to see that the Betti numbers $\beta_i$, the rank of the $i$th homology group $\homo_i(\K)$ defined over a finite field
for a simplicial
complex $\K$ linearly embedded in $\mathbb{R}^3$ can be
computed in $O(n)$ time. For this, compute
$\beta_2$  with a walk over the boundaries of the voids,
compute $\beta_0$ as the number of components of $\K$, and
then compute $\beta_1$ from the Euler characteristics of $\K$ obtained as
the alternating sum of the numbers of simplices of each dimension.
Unfortunately, computation of other topological properties such
as persistence and homology generators (basis) for such a complex $\K$ 
is not known to be any easier than that of
matrix multiplication ($O(n^{\omega})$ time). 
In the special case
when $\K$ is a graph or a surface,
the persistence for a PL function or a filtration
on $\K$ can be computed in $O(n\log n)$ time~\cite{AEHW06,DLL10}.
In this paper, we show that when $\K$ is more general, that is,
a simplicial complex linearly embedded in $\mathbb{R}^3$,
the persistence of a height function on it can be computed
in $O(n\log n)$ time. This special type of persistence which
we term as the {\em height persistence} is not as general as the
standard persistence. Nonetheless, it provides an avenue
to compute a set of {\em basis cycles} in $O(n\log n +k)$ time
where $k$ is the total size of the output. Also, the height persistence
provides a window to the topological features of the
domain $\K$, the need for which arises in various applications.

To arrive at our result, we first observe a connection
between the standard sublevel-set persistence~\cite{ELZ02,ZC05}
and the level-set zigzag persistence~\cite{CSM09} from the
recent work in~\cite{BCE12,BD13,CSM09}. Then, with a sweep-plane algorithm that treats
the level sets as planar graphs embedded in a plane, we compute
a {\em barcode graph} in $O(n\log n)$ time. A barcode is extracted from this graph using a slight but important modification of an algorithm in~\cite{AEHW06}. The barcode extracted from this 
graph provides a part of the height persistence. We show that 
the missing piece can be recovered from the Reeb graph which
can be computed again in $O(n\log n)$ time~\cite{Parsa}. 
We make other observations that allow us to extract the
actual basis cycles from both pieces in $O(n\log n +k)$ time
as claimed.
\section{Background}
A zigzag diagram of topological spaces is a sequence 
\begin{equation}
{\cal X}: \tspace_0\leftrightarrow\tspace_1\leftrightarrow\cdots\leftrightarrow\tspace_m
\label{zigzagd}
\end{equation}
where each $\tspace_i$ is a topological space and each bidirectional
arrow `$\leftrightarrow$' is either a forward or a backward continuous
map. Applying the homology functor with coefficient in a field $\ff$,
we obtain a sequence of vector spaces connected by forward or backward
linear maps, also called a zigzag module:
\begin{eqnarray*}
\homo_p({\cal X}): \homo_p(\tspace_0)\leftrightarrow\homo_p(\tspace_1)
\leftrightarrow\cdots\leftrightarrow\homo_p(\tspace_m)
\end{eqnarray*} 

When all vector spaces in $\homo_p(\cal X)$ are finite dimensional,
the Gabriel's theorem in quiver theory~\cite{Gabriel} says that $\homo_p(\cal X)$ is a 
direct sum of a finite number of interval modules which are of
the form
\begin{eqnarray*}
{\cal I}_{[b,d]}: I_1\leftrightarrow I_2\cdots \leftrightarrow I_m
\end{eqnarray*}
where $I_j=\ff$ for $b\leq j\leq d$ and $\mathbf{0}$ otherwise with the
maps $\ff\leftarrow \ff$ and $\ff\rightarrow \ff$ being identities.
The decomposition $\homo_p({\cal X})=\bigoplus_i {{\cal I}}_{[b_i,d_i]}$
provides a {\em barcode} (set of interval modules) for topological persistence when the topological
spaces $\tspace_i$ originate as sublevel or level sets of a 
real-valued function $f:\tspace\rightarrow\real$ defined on a space
$\tspace$. As shown in~\cite{CSM09},
classical persistence~\cite{ELZ02,ZC05}, its extended
version~\cite{CEH09}, and the more general zigzag persistence~\cite{CSM09}
arise as a consequence of choosing variants of the module ${\cal X}$ 
in~\ref{zigzagd} that are derived from $f$.

\subsection{Standard persistence}
Standard persistence~\cite{ELZ02,ZC05} is 
defined by considering 
the sublevel sets of $f$, that is, $\tspace_i$ is $f^{-1}(-\infty,a_i]$ for
some $a_i\in\real$. These values $a_i$ are taken as the critical
values of $f$ so that the barcode captures
the evolution of the homology classes of the sub-level sets
across the critical values of $f$, which  
are defined below precisely.

For an interval $I\subseteq \real$,
let $\tspace_I:=f^{-1}(I)$ denote the interval set.
Following~\cite{BCE12,CSM09}, we assume that $f$ is tame. It means
that it has finitely many homological critical values 
$a_1<a_2<\cdots<a_m$ so that for each open interval 
$I\in \{(-\infty,a_1),(a_1,a_2),\ldots,(a_{m-1},a_m),(a_m,\infty)\}$,
$\tspace_I$ is homeomorphic to a product space ${\mathbb Y}\times I$,
with $f(\mathbb{Y})\in I$. This homeomorphism should extend to
a continuous function 
$\tspace_{\bar I}\rightarrow {\mathbb{Y}}\times {\bar I}$,
with $\bar I$ being the  
closure of $I$ and each interval
set $\tspace_I$ should have finitely generated homology groups. 

It turns out that the description of the interval modules assumes one more
subtle aspect when it comes to describing the  
standard persistence and zigzag persistence in general. 
Specifically, the interval modules can be
{\em open} or {\em closed} at their end points. 
To elucidate this,
consider a set of values $\{s_i\}$ of $f$ 
interleaving with its critical values:
$$
s_0 < a_1 < s_1 <\ldots <a_m<s_m
$$
Assuming $a_0=-\infty$ and $a_{m+1}=\infty$, one can write
the sub-level sets as $\tspace_{[0,r]} :=f^{-1}(-\infty,r]$. 
For standard persistence, we consider the sublevel set diagram
and its corresponding homology module $\homo_p(\sublev(f,\tspace))$ for
dimension $p\geq 0$:
\begin{eqnarray*}
	\sublev(f,\tspace): \tspace_{[0,a_1]}\rightarrow \tspace_{[0,s_1]}
	\rightarrow\tspace_{[0,a_2]}\cdots
	\rightarrow \tspace_{[0,s_m]}\rightarrow\tspace_{[0,a_{m+1}]}\\
	\homo_p(\sublev(f,\tspace)): \homo_p(\tspace_{[0,a_1]})\rightarrow \homo_p(\tspace_{[0,s_1]})
	\rightarrow\homo_p(\tspace_{[0,a_2]})\cdots
	\rightarrow \homo_p(\tspace_{[0,s_m]})\rightarrow\homo_p(\tspace_{[0,a_{m+1}]})
\end{eqnarray*}
The summand interval modules, or the so called {\em bars}, for
this case has the form $[a_i,s_j]$. This means that a $p$-dimensional
homology class
is born at the critical value $a_i$ and it dies at the value $s_j$.
The right end point of $s_j$ is an artifact of our choice of the
intermediate value $s_j\in (a_j,a_{j+1})$. Because of our
assumption that $f$ is tame, homology classes cannot die
in any open interval between the critical values. In fact, they remain
alive in the interval $(a_j,a_{j+1})$ and may die entering the
critical value $a_{j+1}$. To accommodate this fact, we
convert each bar $[a_i,s_j]$ of the standard persistence
to a bar $[a_i,a_{j+1})$ that is open on the right end point.

One can see that there are two types of bars in the standard
persistence, one of the type $[a_i,a_j)$, $j\not=m+1$, which is
bounded (finite) on the right, and
the other of the type $[a_i,\infty)$ which is unbounded (infinite) on the right.
The unbounded bars represent the essential homology
classes since  $\homo_p(\tspace)\cong\bigoplus_i [a_i,\infty)$. 
The work of~\cite{BCE12,BD13,CSM09} implies that both 
types of bars of
the standard persistence can be recovered from those of the
level set zigzag persistence as described next. This observation leads to an efficient
algorithm for computing the standard persistence in $\real^3$. 

\subsection{Level set zigzag}
In level set zigzag persistence, we track the changes in the
homology classes in the level sets $\tspace_{r}=f^{-1}(r)$
instead of the sub-level sets. We need maps connecting
individual level sets, which is achieved 
by including the level sets into the adjacent interval sets.
For this purpose we use the notation $\tspace_i^j:=\tspace_{[s_i,s_j]}$
for the interval set between the two non-critical level sets.
We have a zigzag sequence of interval and
level sets connected by inclusions producing a level set
zigzag diagram:
\begin{eqnarray*}
\lev(f,\tspace): \tspace_0^0\rightarrow \tspace_0^1 \leftarrow \tspace_1^1\rightarrow
\tspace_1^2\cdots\rightarrow \tspace_{m-1}^m\leftarrow \tspace_m^m.
\end{eqnarray*}
Applying the homology functor $\homo_p$ with coefficients
in a field $\ff$, we obtain the 
zigzag persistence module for any dimension $p\geq 0$
\begin{equation}
\homo_p(\lev(f,\tspace)): \homo_p(\tspace_0^0)\rightarrow \homo_p(\tspace_0^1) \leftarrow \homo_p(\tspace_1^1)\rightarrow
\cdots\rightarrow \homo_p(\tspace_{m-1}^m)\leftarrow \homo_p(\tspace_{m}^m).
\label{eqn-zigzag}
\end{equation}
The zigzag persistence of $f$ is given by the 
summand interval modules of $\homo_p(\lev(f,\tspace))$. Each interval module
is of the type $[r,r']$ where $r$ and $r'$ can be
$a_i$ or $s_i$ for some $i\in [0,m+1]$.
Just as in the sub-level set persistence, we identify the 
end points of the interval modules with the critical values \parpic[r]{\includegraphics[height=3.5cm]{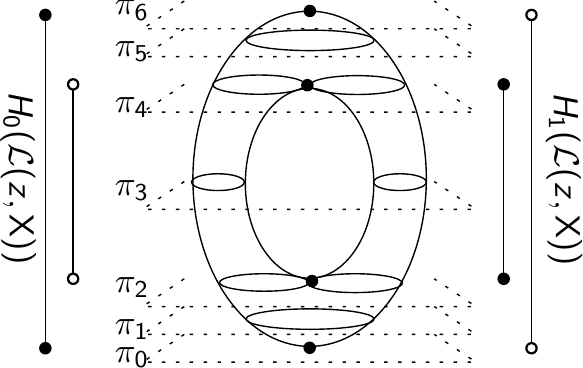}}
that were used to define the
level set zigzag in the first place. 
In keeping with the understanding that even the level set
homology classes
do not change in the open interval sets, we
convert an endpoint $s_i$ to an adjacent critical value
and make the interval module open at that critical value. 
Precisely we modify the interval modules as
(i) $[a_i,a_j]\Leftrightarrow [a_i,a_j]$,
(ii) $[a_i,s_j]\Leftrightarrow [a_i,a_{j+1})$
(iii) $[s_i,a_j]\Leftrightarrow (a_i,a_j]$
(ii) $[s_i,s_j]\Leftrightarrow (a_i,a_{j+1})$.
The intervals in (i)-(iv) are referred as {\em closed-closed}, 
{\em closed-open}, {\em open-closed}, and {\em open-open} bars
respectively. The figure above shows the two bar codes, one for
$\homo_0$ and another for $\homo_1$ for a height function on a torus.
The rightmost picture shows the barcode graph of $\homo_1(\lev(z,{\sf X}))$
which we explain later.

%The mapping of the interval modules $[i,j]$ to the critical values
%gives rise to open and closed intervals--a concept crucial for developing the
%algorithms later. If any of $i$ and $j$, say $i$, in an interval module
%$[i,j]$ is even, it corresponds to the slice $\tspace_i^i$
%representing the open interval set between the crtical values
%$a_{i}$ and $a_{i+1}$. Conversely, if $i$ is odd, it corresponds
%to the slice $\tspace_{i-1}^i$ representing the level set
%at the critical value $a_i$. Therefore, we translate an 
%interval module $[i,j]$ to an interval of critical values
%(i) $[a_i,a_j]$ when $i$ and $j$ are odd, 
%(ii) $[a_i,a_j)$ when 
%$i$ odd and $j$ even, (iii) $(a_i,a_j]$ when $i$ even and $j$ odd,
%and (iv) $(a_i,a_j)$ when $i$ and $j$ are both even. 
%The intervals in (i)-(iv) are referred as {\em closed-closed}, 
%{\em closed-open}, {\em open-closed}, and {\em open-open} bar code
%respectively.

Using the results in~\cite{BD13,CSM09},
we can connect the standard persistence with the level set zigzag 
persistence as follows:

\begin{theorem}
	~
	\begin{enumerate}
		\item $[a_i,a_j)$ is a bar for $\homo_p(\sublev(f,\tspace))$ 
		iff it is so for $\homo_p(\lev(f,\tspace))$,
		\item $[a_i,\infty)$ is a bar for $\homo_p(\sublev(f,\tspace))$
		iff either $[a_i,a_j]$ is a closed-closed bar for
		$\homo_p(\lev(f,\tspace))$ for some $a_j>a_i$, or
		$(a_j,a_i)$ is an open-open bar for 
		$\homo_{p-1}(\lev(f,\tspace))$ for 
		some $a_j<a_i$.
	\end{enumerate}
	\label{main-thm}
\end{theorem}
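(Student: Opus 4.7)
The plan is to invoke the Mayer--Vietoris pyramid of Carlsson--de Silva--Morozov~\cite{CSM09}, as sharpened in~\cite{BCE12,BD13}, which organizes the sublevel module $\homo_p(\sublev(f,\tspace))$, its superlevel counterpart, the level-set zigzag module $\homo_p(\lev(f,\tspace))$, and the corresponding relative modules into a single triangular array of interval-indexed vector spaces. Adjacent rows of this pyramid differ by a single \emph{diamond move} that swaps a union-inclusion pair $X\to X\cup Y$ with the corresponding intersection-inclusion pair $X\cap Y\leftarrow Y$, and the Diamond Principle prescribes how each bar transforms under such a move. Since the sublevel module sits on one outer edge of the pyramid and the level-set zigzag on the opposite edge, both parts reduce to chasing a bar from one edge to the other along a prescribed sequence of diamond moves.

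For part (1), a finite bar $[a_i,a_j)$ of $\homo_p(\sublev(f,\tspace))$ corresponds, within the pyramid, to an ordinary birth-death pair whose support lies entirely in the sublevel region. Each diamond move along the route to the level-set edge either leaves such a bar in place or shifts one endpoint within the same equivalence class under the open/closed relabeling of rules (i)--(iv). Consequently the bar descends to a closed-open bar $[a_i,a_j)$ of $\homo_p(\lev(f,\tspace))$, and the same chain of moves, read in reverse, establishes the converse.

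For part (2), an infinite bar $[a_i,\infty)$ records an essential $p$-dimensional class of $\tspace$. In the pyramid such a class is paired either on the superlevel side against a generator of the same homological degree, or through the Mayer--Vietoris connecting homomorphism against a generator in degree $p-1$. The first case transports under the diamond moves to a closed-closed bar $[a_i,a_j]$ of $\homo_p(\lev(f,\tspace))$: the class is born in the level set at $a_i$, persists across every intervening interval set, and dies in the level set at $a_j>a_i$. The second case, owing to the degree shift built into the connecting map, transports to an open-open bar $(a_j,a_i)$ of $\homo_{p-1}(\lev(f,\tspace))$ with $a_j<a_i$; the openness of both endpoints reflects the fact that the representative only bounds at the enclosing critical levels themselves. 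The partition of essential sublevel classes into these two pyramid-side cases yields the claimed disjunction.

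The main obstacle, as I see it, is not homological but bookkeeping: the transition from the intermediate-value tags $s_j$ to the critical-value tags $a_j$ with open/closed brackets, dictated by rules (i)--(iv), must commute with every diamond move used in the route. Once this compatibility is checked---the content of the relabeling lemmas implicit in~\cite{BD13,CSM09}---both parts follow from the Diamond Principle without further homological calculation.
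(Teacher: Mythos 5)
Your proposal is correct and follows essentially the same route as the paper: the paper's proof also reduces both parts to the correspondence between sublevel/extended persistence and the level-set zigzag established in~\cite{CSM09} (Table 1, Types I, III, IV) together with the decomposition $\homo_p(\tspace)\cong\opbarhom_{p-1}(f,\tspace)\oplus\clbarhom_p(f,\tspace)$ from~\cite{BD13}. The only difference is presentational—you sketch the Mayer--Vietoris pyramid and Diamond Principle machinery underlying those citations, whereas the paper invokes the cited results directly after splitting $\homo_p(\sublev(f,\tspace))$ into its finite and infinite summands.
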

\begin{proof}
	We know $\homo_p(\sublev(f,\tspace))\cong 
	(\oplus_{i,j}[a_i,a_j)) \bigoplus (\oplus_i [a_i,\infty))$.
	The first summand given by the finite intervals is
	isomorphic to a similar summand $\oplus_{i,j}[a_i,a_j)$ in
	the level set zigzag module $\homo_p(\lev(f,\tspace))$;
	see~\cite{CSM09}(Table 1, Type I). The second
	summand is isomorphic to $\homo_p(\tspace)$,
	which by a result in~\cite{BD13} is isomorphic to 
	$\opbarhom_{p-1}(f,\tspace)\oplus \clbarhom_p(f,\tspace)$ 
	where the open-open interval modules in $\homo_{p-1}(\lev(f,\tspace))$
	generate $\opbarhom_{p-1}(f,\tspace)$ 
	and the closed-closed interval modules in $\homo_p(\lev(f,\tspace))$
	generate $\clbarhom_p(f,\tspace)$.
	Then, the claimed result follows again 
	from~\cite{CSM09}(Table 1, Type III and IV).
\end{proof}

\vspace{0.1in}
\noindent
{\bf Overview and main results.}
Let $\K$ be a simplicial complex consisting of $n$ simplices that are
linearly embedded in $\real^3$. Let $|\K|$ denote the geometric realization arising out of this embedding. 
First, assume that $\K$ is a pure $2$-complex, that is, its highest
dimensional simplices are triangles and all vertices and edges
are faces of at least one triangle. The algorithm for the
case when it has tetrahedra and possibly edges and vertices
that are not faces of triangles  
follows straightforwardly from the case when $\K$ is pure,
and is remarked upon at the end. Another assumption we make for our algorithm is that the coefficient field $\kappa$ of the homology groups is $\Z_2$.

A function $f: |\K|\rightarrow \real$ is called a {\em height} function
if there is an affine transformation $T$ of the coordinate frame
so that $f(x)=z(T(x))$ for all points $x\in |\K|$ with $z$-coordinate 
being $z(x)$.
Without loss of generality, assume that $f$ is indeed the 
$z$-coordinate function
and $z$ is proper, that is, its values on the vertices are distinct. 
The standard topological persistence of $z$ on $|\K|$ is called
the {\em height persistence} which we aim to compute. 
Theorem~\ref{main-thm} says that we can compute the barcode of the
height persistence by computing the same for the
level set zigzag persistence
using the same height function. Precisely, we first compute the barcode
for $\homo_1(\lev(z,|\K|))$ from which we obtain a partial set of bars
for $\homo_1(\sublev(z,|\K|))$ and the complete set of bars for 
$\homo_2(\sublev(z,|\K|))$.
This is achieved by maintaining a level set
data structure and tracking a set of {\em primary} cycles in them
as we sweep through $|\K|$ along increasing $z$. At the same time, we build
a barcode graph that registers the birth, death, split, and merge
of the primary cycles. We show that this can be done 
in $O(n\log n)$ time. The bars of $\homo_1(\lev(z,|\K|))$ are extracted
from this graph again in $O(n\log n)$ time by adapting an algorithm 
of~\cite{AEHW06} to our case after a slight but important modification. According to Theorem~\ref{main-thm},
the closed-open and closed-closed bars of $\homo_1(\lev(f,|\K|))$ constitute
a partial set of bars for $\homo_1(\sublev(z,|\K|))$. The open-open
bars of $\homo_1(\lev(f,|\K|))$, on the other hand,
constitute a complete list of bars for the 
second homology module $\homo_2(\sublev(z,|\K|))$ because the other
summands for $\homo_2(\sublev(f,|\K|))$ are trivial.

The rest of the bars
of $\homo_1(\sublev(z,|\K|))$ which are the open-open bars 
of $\homo_0(\lev(z,|\K|))$ (Theorem~\ref{main-thm}) 
are shown to be captured by the Reeb graph
of $z$ on $|\K|$ which can be computed in $O(n\log n)$ time~\cite{Parsa}. 
We show that the basis cycles for the first and second homology groups
can be computed as part of the level set persistence 
and Reeb graph computations. 

\begin{theorem}
Let $\K$ be a simplicial complex embedded in $\mathbb{R}^3$ with $n$ simplices.
Let $z: |\K|\rightarrow \mathbb{R}$ be a height function defined on it. One can compute the barcode for $\homo_i(\lev(z,|\K|))$ for $i=0,1,2$, in $O(n\log n)$ time where $n$ is the number of simplices in $\K$. Furthermore, a set of basis cycles for $\homo_i(\K)$, $i=0,1,2$, can be computed in time $O(n\log n +k)$ where $k$ is the total size of the output cycles.
\label{main1-thm}
\end{theorem}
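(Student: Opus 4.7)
The plan is to prove Theorem~\ref{main1-thm} by carrying out three $O(n\log n)$ subroutines whose outputs are combined via Theorem~\ref{main-thm}. First, a sweep-plane algorithm computes the barcode of $\homo_1(\lev(z,|\K|))$, together with the bars for $\homo_2(\lev(z,|\K|))$ that come from enclosed voids. Second, the algorithm of~\cite{Parsa} builds the Reeb graph of $z$ on $|\K|$ in $O(n\log n)$ time, from which the barcode of $\homo_0(\lev(z,|\K|))$ is read off. Finally, at each birth event I would retain a primary cycle realising the corresponding homology class, so that the basis can be reported output-sensitively.

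For the sweep I would first sort the vertices of $\K$ by $z$-value in $O(n\log n)$ time, identifying the critical values $a_1<\cdots<a_m$. Between two consecutive critical values the height-$c$ level set is a planar graph of combinatorial size $O(n)$ when summed over all slabs, because each triangle of $\K$ contributes at most one edge to each level set it crosses. I would maintain this planar graph dynamically, together with a chosen set of primary cycles representing $\homo_1$ of the current level set, in a balanced data structure keyed by horizontal position. When the sweep crosses a vertex $v$, the update to the level set is dictated by the link of $v$ and can be processed in $O((\deg v)\log n)$ time; each birth, death, merge, or split of a primary cycle is appended as a new node of a \emph{barcode graph}. Summing over all critical events gives a total sweep cost of $O(n\log n)$ and a barcode graph of $O(n)$ size.

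The main obstacle will be reading the correct zigzag barcode off this barcode graph with the proper open/closed endpoint tagging. I plan to adapt the graph-persistence algorithm of~\cite{AEHW06}, which pairs birth-death events in a sweep-tree in $O(n\log n)$ time. The modification needed is that the zigzag diagram of Equation~(\ref{eqn-zigzag}) alternates between forward and backward inclusions, so each endpoint of a barcode interval must be marked open or closed depending on the direction of the inclusion at the corresponding critical value; this is a bookkeeping change that preserves the overall complexity. The $\homo_2$ level set bars are handled analogously by tracking births and deaths of bounded faces of the planar level set graph, which encode voids in the interval sets. Once the three level set zigzag barcodes are in hand, Theorem~\ref{main-thm} translates them into the full sublevel-set persistence bars, with the open-open $\homo_0$ bars supplied by the independent cycles of the Reeb graph.

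For the basis cycles of $\homo_i(\K)$, each infinite bar $[a_i,\infty)$ in $\homo_p(\sublev(z,|\K|))$ arises either from a closed-closed bar of $\homo_p(\lev(z,|\K|))$ or from an open-open bar of $\homo_{p-1}(\lev(z,|\K|))$, and in either case the primary cycle stored at the birth event serves as a representative in $\K$. Writing each such cycle out as an explicit edge or triangle list costs time proportional to its combinatorial length; summed over all output bars this contributes the additional $O(k)$ term, yielding the claimed $O(n\log n+k)$ bound and completing the proof. The extension to complexes containing tetrahedra or dangling edges and vertices follows straightforwardly, since the added simplices only enlarge interval sets without affecting the planar structure of generic level sets.
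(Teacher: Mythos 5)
Your overall architecture (sweep building a barcode graph, extraction via~\cite{AEHW06}, Reeb graph via~\cite{Parsa} for the $\homo_0$ level-set bars, then Theorem~\ref{main-thm}) is the same as the paper's, but the step that carries the whole correctness burden is missing. Adapting the algorithm of~\cite{AEHW06} with only ``open/closed endpoint tagging according to the direction of the inclusion'' does not work: the paper shows that a straightforward application of that algorithm to the barcode graph gives a \emph{wrong} answer unless the graph $R$ is first augmented by a monotone dummy thread attached to every open degree-one node and all artificial same-level edges are contracted. Only after this modification does one have the isomorphism $\homo_1(\lev(z,|\K|))\cong \tilde\homo_0(\lev(z,R))$ (Proposition~\ref{one-zero-prop}), proved via reduced homology, a level-by-level commuting diagram, and dualization to reverse the arrows; this isomorphism is what certifies that the intervals read off $R$ are the bars of $\homo_1(\lev(z,|\K|))$. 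Your proposal contains no analogue of the threading, the contraction, or the isomorphism, so the central claim is unproven and the unmodified extraction fails on examples such as the punctured torus of Figure~\ref{barcode-graph}. Relatedly, your treatment of $\homo_2(\lev(z,|\K|))$ is confused: every level set is a planar graph, so $\homo_2(\lev(z,|\K|))$ is trivial and there are no ``$\homo_2$ level-set bars'' to track via bounded faces; bounded faces encode exactly the $\homo_1$ classes (the primary cycles), and voids enter only through Theorem~\ref{main-thm}, as open-open bars of $\homo_1(\lev(z,|\K|))$ that become infinite bars of $\homo_2(\sublev(z,|\K|))$.

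The generator computation also has genuine gaps. Storing a full primary cycle at every birth event is not output-sensitive: many births belong to closed-open bars whose cycles are never output, so the total size stored is not bounded by $O(n+k)$; the paper instead records one directed edge and the splitting vertex per bar and traces the cycle only after the closed-closed bars have been identified. More importantly, ``the primary cycle stored at the birth event'' is not a valid representative in the other two cases. For the infinite $\homo_1(\sublev(z,|\K|))$ bars arising from open-open bars of $\homo_0(\lev(z,|\K|))$, the birth datum is a connected component, not a $1$-cycle; one must compute a cycle basis of $\reeb_z(|\K|)$ and lift it through the quotient map $\phi:|\K|\rightarrow\reeb_z(|\K|)$, with the vertical-homology argument of~\cite{DW13} guaranteeing that the lifts form a sub-basis of $\homo_1(|\K|)$. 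For $\homo_2(\K)$, a representative is a $2$-cycle obtained by collecting the triangles bounding the void associated with an open-open $\homo_1(\lev(z,|\K|))$ bar (a depth-first walk from a triangle of the stored directed edge), not the primary $1$-cycle itself. Without these steps the claimed $O(n\log n+k)$ bound for basis cycles in dimensions $1$ and $2$ is not established.
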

 Similar statement holds for standard persistence.
\begin{theorem}
Let $\K$ be a simplicial complex embedded in $\mathbb{R}^3$ with $n$ simplices.
Let $z: |\K|\rightarrow \mathbb{R}$ be a height function defined on it. One can compute the barcode for $\homo_i(\sublev(z,|\K|))$ for $i=0,1,2$, in $O(n\log n)$ time where $n$ is the number of simplices in $\K$.
\label{main2-thm}
\end{theorem}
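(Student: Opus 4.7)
The plan is to derive Theorem~\ref{main2-thm} as an immediate corollary of Theorem~\ref{main1-thm} together with the translation dictionary provided by Theorem~\ref{main-thm}. First I would invoke Theorem~\ref{main1-thm} to compute, in $O(n\log n)$ total time, the level set zigzag barcodes for $\homo_0(\lev(z,|\K|))$, $\homo_1(\lev(z,|\K|))$, and $\homo_2(\lev(z,|\K|))$, together with the classification of every bar as closed-closed, closed-open, open-closed, or open-open.

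Next, I would produce the barcode of $\homo_p(\sublev(z,|\K|))$ for each $p \in \{0,1,2\}$ by a single linear-time pass over the level set barcodes. By clause~(1) of Theorem~\ref{main-thm}, each finite bar $[a_i,a_j)$ of $\homo_p(\sublev)$ is copied directly from the corresponding closed-open bar of $\homo_p(\lev)$. By clause~(2), each infinite bar $[a_i,\infty)$ of $\homo_p(\sublev)$ is emitted for every closed-closed bar $[a_i,a_j]$ of $\homo_p(\lev)$, taking its left endpoint as the birth, and also for every open-open bar $(a_j,a_i)$ of $\homo_{p-1}(\lev)$, taking its right endpoint as the birth. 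For $p=0$ the second source is vacuous, which is consistent with the fact that connected components contribute only infinite bars born from closed-closed bars of $\homo_0(\lev)$; for $p=2$ both sources are needed, which is exactly why Theorem~\ref{main1-thm} is applied not only for $i=2$ but also for $i=1$.

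Since the total number of bars across the three level set zigzag modules is $O(n)$, the post-processing runs in $O(n)$ time and the overall complexity remains $O(n\log n)$. The heavy lifting has already been absorbed into Theorem~\ref{main1-thm}, which hides the sweep-plane construction, Reeb-graph computation, and barcode extraction. The only genuine care required here is bookkeeping: tracking open/closed endpoint types produced by the level set algorithm and correctly implementing the dimensional shift $p \mapsto p-1$ when translating open-open bars of $\homo_{p-1}(\lev)$ into infinite bars of $\homo_p(\sublev)$. I therefore do not expect any further obstacle beyond the one already resolved in the proof of Theorem~\ref{main1-thm}.
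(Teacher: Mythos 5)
Your proposal is correct and follows essentially the same route as the paper: the sublevel barcodes are read off from the level set zigzag barcodes of Theorem~\ref{main1-thm} (whose $\homo_0(\lev(z,|\K|))$ part already hides the Reeb graph computation) via the dictionary of Theorem~\ref{main-thm}, with a linear-time translation pass. The only cosmetic difference is that the paper spells out each dimension explicitly (noting in particular that $\homo_2(\lev(z,|\K|))$ is trivial because level sets are planar, so your closed-closed source for $p=2$ is in fact empty and $\homo_2(\sublev(z,|\K|))$ has no finite bars), whereas you treat all three dimensions uniformly as a corollary.
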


\section{Level set data structure}
Let $v_1,v_2,\ldots,v_{m}$ be the set of vertices
of $\K$ ordered by increasing $z$-values, that is,
$z(v_j)>z(v_i)$ for $j>i$. 
Consider sweeping $|\K|$ in the increasing order of $z$-values. A level set
$|\K|_r:=z^{-1}(r)$, $r\in \real$, viewed as a graph 
embedded in the plane $\pi_r= \{x\in\mathbb{R}^3\mid z(x)=r\}$,
does not change its adjacency structure in any open
interval $(z(v_i),z(v_{i+1}))$. This structure, however, may change
as the level set sweeps through a vertex of $\K$. Consequently, for
every vertex $v_i\in \K$, 
it suffices to track the changes when the level set
jumps from the intermediate level $s_{i-1}< z(v_i)$ 
to the level $a_i:=z(v_i)$ and
then to the intermediate level $s_i>z(v_i)$ where
$z(v_0)=-\infty$, and $z(v_{m+1})=\infty$.
All three level sets $|\K|_{s_{i-1}}$, $|\K|_{a_i}$,
and $|\K|_{s_i}$ are plane 
graphs embedded linearly in the planes $z=s_{i-1}$, $a_i$,
and $s_i$ respectively. Let $G_{r}=(V_{r},E_{r})$
denote any such generic level set graph at a level $r$, where the vertex set $V_{r}$ is the restrictions of the level set to the edges
of $\K$ and the edge set $E_{r}$ is
the restriction of the level set to the triangles of $\K$.
To avoid confusions, we will say {\em complex edges} and
{\em complex triangles} to refer to the edges and triangles 
of $\K$ respectively.

\vspace{0.1in}
\noindent
{\bf Level set graph and homology basis.}
We need to track a set of cycles representing a homology basis
of $\homo_1(G_{s_{i-1}})$ 
to that of $\homo_1(G_{a_i})$ and then
to that of $\homo_1(G_{s_i})$ as we sweep through the
vertex $v_i$. Consider any such generic level set graph $G_r=(V_r,E_r)$ 
representing $z^{-1}(r)$. 
\piccaption{Primary and secondary cycles.\label{prim-pic}}
\parpic[r]{\includegraphics[width=3.5cm]{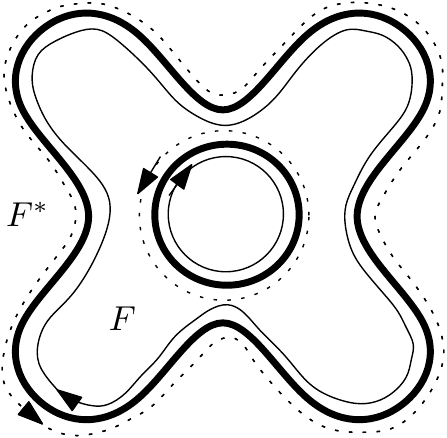}}
The embedding of $G_r$ in the plane $\pi_r$ produces a partition
of $\pi_r$ into 
$2$-dimensional faces, $1$-dimensional edges, and $0$-dimensional vertices.
The faces are 
the connected components 
of $\pi_r\setminus G_r$. 
Let $\F_r$ denote the collection of all $2$-faces in this partition.
A face $F\in\F_r$
has boundary cycle $\partial F$ consisting of possibly multiple components, each being a cycle. 
We orient $F$ by orienting its 
boundary and denote it with $\overrightarrow{F}$.
The orientation is such that $\partial\overrightarrow{F}$ 
has the face on its right. In Figure~\ref{prim-pic}, 
the face $F$ has two boundaries, one around
the outer curve (shown solid) and another around the
inner circle (shown dotted).
The unique face in $\F_r$ that is unbounded plays
a special role and is denoted $F^*$. 
\begin{observation}
	For a bounded face $F\in \F_r$, there is a unique oriented cycle
	$C_{\overrightarrow{F}}\in \partial \overrightarrow{F}$ that
	bounds a bounded face of $\pi_r\setminus C_F$ on its right. By definition,
	the unbounded face $F^*$ has no such $C_{\overrightarrow{F^*}}$.
	In the figure above, $C_{\overrightarrow{F}}$ is the solid
	curve around outer boundary. 
	\label{essen-obs}
\end{observation}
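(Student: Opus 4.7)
The plan is to exploit the Jordan curve theorem together with the right-hand-side orientation convention that defines $\overrightarrow F$. I would enumerate the components of $\partial F$ as disjoint simple closed curves $C_1,\ldots,C_k$ in the plane $\pi_r$; by the Jordan curve theorem, each $C_i$ partitions $\pi_r$ into a bounded ``inside'' and an unbounded ``outside''.

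First I would single out a natural candidate for $C_{\overrightarrow F}$, namely the unique component $C_j$ of $\partial F$ whose inside contains $F$. Existence follows from $F$ being bounded: the outermost boundary cycle of $F$ (the one not contained in the inside of any other $C_i$) must enclose $F$, for otherwise $F$ would escape to infinity. Uniqueness uses the fact that the insides of two disjoint boundary cycles of $F$ are either nested or disjoint, so the connected set $F$ can lie in the inside of only the innermost such enclosing cycle. The remaining cycles $C_i$, $i \ne j$, then have $F$ lying in their outside, with their insides disjoint from $F$.

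The key step is to convert this topological dichotomy into the orientation condition. By definition, $\partial\overrightarrow F$ is oriented so that $F$ lies on the right of each boundary edge. For the outer cycle $C_j$, since $F$ is contained in its inside, the right-hand rule forces a clockwise traversal of $C_j$ with respect to the standard orientation of $\pi_r$, whereupon the bounded component of $\pi_r \setminus C_j$ (its inside) lies on the right, confirming the condition. For any other $C_i$, $F$ lies in the outside of $C_i$, forcing the opposite (counterclockwise) traversal and placing the bounded component of $\pi_r \setminus C_i$ on the left. This establishes both existence and uniqueness of $C_{\overrightarrow F}$ simultaneously. The unbounded face $F^*$ is handled by exactly the same analysis: $F^*$ lies in the outside of every one of its boundary cycles, so every oriented boundary cycle of $\partial \overrightarrow{F^*}$ has its bounded complementary region on the left, and no $C_{\overrightarrow{F^*}}$ exists.

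The step I expect to require the most care is the ``right-hand rule $\Leftrightarrow$ inside/outside'' correspondence; while intuitively obvious, it rests on a local argument near a single edge of $C_i$ using a fixed orientation of $\pi_r$, where one checks that the bounded complementary region locally sits on the right precisely when the cycle is traversed clockwise. Once this equivalence is in hand, the proof reduces to the case split between the outermost cycle $C_j$ and the inner cycles of $\partial F$ described above.
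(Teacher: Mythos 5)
The paper states this as an unproved Observation, so there is no official proof to match against; your outer-versus-inner-boundary strategy is indeed the natural formalization of what the author takes for granted. However, as written your argument has a genuine gap: you assume each component of $\partial F$ is a simple closed curve and invoke the Jordan curve theorem plus a clockwise/counterclockwise dichotomy. In this setting that assumption fails. The level set graph $G_r$ can have degree-$1$ vertices (a complex edge incident to exactly one triangle gives a dangling edge of $G_r$ -- this is precisely the first case of the paper's connection rule), as well as cut vertices, so a boundary component of a face is in general a closed \emph{walk} that may traverse an edge twice and revisit vertices. For such a walk $C$, $\pi_r\setminus C$ need not have ``an inside and an outside'': it can have no bounded component at all (a walk around a tree) or several (a figure-eight-shaped component), and ``clockwise traversal'' is not defined along an out-and-back portion. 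Since the statement being proved is exactly about which oriented cycles bound a bounded face of $\pi_r\setminus C_F$ on their right, these degenerate cycles are part of the claim and cannot be excluded.

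The fix is local and keeps your overall structure. Because every directed edge of a boundary walk $\overrightarrow{C}$ of $F$ has $F$ immediately to its right, the face of $\pi_r\setminus C$ lying to the right of \emph{every} edge of $\overrightarrow{C}$ is precisely the component of $\pi_r\setminus C$ containing $F$; hence $\overrightarrow{C}$ is primary if and only if $F$ lies in a bounded component of $\pi_r\setminus C$. It then remains to show that a bounded face $F$ has exactly one boundary component with this property: existence follows by taking the unbounded component $W$ of $\pi_r\setminus \overline{F}$, whose (connected) frontier lies in a single component $C_j$ of $\partial F$ separating $F$ from infinity; uniqueness follows because if some other component $C_i$ also enclosed $F$, then $C_j\subseteq\overline{F}$ would be trapped in a bounded component of $\pi_r\setminus C_i$ while it can be joined to infinity avoiding $C_i$, a contradiction. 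For $F^*$ the criterion immediately yields no primary cycle, since the component of $\pi_r\setminus C$ containing $F^*$ is always unbounded. With this replacement of the Jordan-curve/orientation step, your existence-and-uniqueness case split goes through for arbitrary (possibly non-simple) boundary walks.
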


Because of the uniqueness of the cycles $C_{\overrightarrow{F}}$, 
we give them the special name of
{\em primary} cycles. All other cycles are {\em secondary}. 
In Figure~\ref{prim-pic}, the primary cycles are rendered solid and the
secondary ones are rendered dotted.
Recall that the elements of the first homology group $\homo_1$
are classes of cycles denoted $[C]$ for a cycle $C$.
It turns out that the classes of unoriented
primary cycles form a basis for $\homo_1(G_r)$ and thus 
tracking the primary cycles across the levels
become the key to computing the level set zigzag persistence.
\begin{proposition}
	The classes of unoriented cycles $\{[C_F]\,|\, C_{\overrightarrow{F}} \mbox{ is primary}\}$
	form a basis of $\homo_1(G_r)$.
	\label{primary}
\end{proposition}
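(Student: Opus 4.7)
The plan is to combine a dimension count with a linear-independence argument, observing first that for the $1$-complex $G_r$ we have $\homo_1(G_r;\ff)=Z_1(G_r;\ff)$. Euler's formula applied to the planar embedding $G_r\hookrightarrow\pi_r$ gives $|V_r|-|E_r|+|\F_r|=1+\beta_0(G_r)$, so $|\F_r|-1=\beta_1(G_r)=\dim\homo_1(G_r;\ff)$. By Observation~\ref{essen-obs}, the unbounded face $F^*$ is the only face lacking a primary cycle, so the collection $\{[C_F]:F\text{ bounded}\}$ has exactly $\dim\homo_1(G_r;\ff)$ elements and it remains to prove linear independence.

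To prove independence I would use $\ff$-valued winding numbers as a pairing. For each bounded face $F$, let $D_F\subseteq\pi_r$ denote the closed topological disk bounded by the Jordan curve $C_F$, chosen so that $F\subseteq D_F^\circ$. For each bounded face $F'$, fix a point $p_{F'}\in F'$ and a generic ray $\rho_{F'}$ from $p_{F'}$ to infinity in $\pi_r$ transverse to $G_r$, and define the linear functional $w_{F'}:Z_1(G_r;\ff)\to\ff$ by $w_{F'}(c)=|c\cap\rho_{F'}|\bmod 2$; standard transversality arguments show this is independent of the ray and of the point chosen in $F'$. The key computation is that $w_{F'}(C_F)=1$ iff $D_{F'}\subseteq D_F$: indeed $w_{F'}(C_F)=1$ iff $p_{F'}\in D_F^\circ$, which since $F'$ is connected and disjoint from $C_F\subseteq G_r$ is equivalent to $F'\subseteq D_F^\circ$; and a short application of the Jordan curve theorem to $C_{F'}$ inside $D_F$ shows this is equivalent to $D_{F'}\subseteq D_F$.

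Now suppose $\sum_{F\in S}[C_F]=0$ in $\homo_1(G_r;\ff)$ for some non-empty $S\subseteq\{\text{bounded faces}\}$. Choose $F_0\in S$ so that $D_{F_0}$ is maximal under inclusion in $\{D_F:F\in S\}$. Applying $w_{F_0}$ to the sum gives
\[
0 \;\equiv\; \sum_{F\in S}w_{F_0}(C_F) \;=\; \bigl|\{F\in S:D_F\supseteq D_{F_0}\}\bigr|\pmod{2}.
\]
Because $F$ is uniquely recoverable from $D_F$ as the face of $G_r$ adjacent to the Jordan curve $C_F=\partial D_F$ from inside $D_F$ having $C_F$ as a full boundary component, the map $F\mapsto D_F$ is injective; combined with the maximality of $D_{F_0}$, the set on the right reduces to $\{F_0\}$, producing parity $1$ and hence a contradiction. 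Thus $S=\emptyset$, the primary cycles are linearly independent, and the proposition follows.

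The main obstacle in formalizing this will be the pair of underlying planar-topological lemmas: the equivalence $F'\subseteq D_F\iff D_{F'}\subseteq D_F$ and the injectivity of $F\mapsto D_F$. Both reduce to applications of the Jordan curve theorem to simple cycles in $G_r$, but need care when $C_F$ and $C_{F'}$ share edges or when $G_r$ contains bridges or isolated vertices; fortunately such non-cycle features never enter a primary cycle and so do not affect the winding-number pairing.
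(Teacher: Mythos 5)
Your dimension count via Euler's formula is fine, but the independence half of your argument rests on a false geometric assumption. You treat each primary cycle $C_F$ as a Jordan curve and define $D_F$ as the disk it bounds, asserting at the end that bridges and cut vertices ``never enter a primary cycle.'' In the present setting that is not true: a primary cycle is a boundary walk of a face of a level-set graph of a $2$-complex in $\mathbb{R}^3$, and it can pass through a cut vertex of $G_r$ more than once. Concretely, take two triangulated vertical tubes, one inside the other, sharing a common vertical path of complex edges; a generic level set is a pair of internally tangent circles, and the annular face $F$ between them has a single boundary component --- its primary cycle --- which visits the tangency vertex twice and whose $\Z_2$-chain is the sum of \emph{both} circles. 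For such an $F$ the disk $D_F$ is simply undefined, and the key equivalence $w_{F'}(C_F)=1\iff D_{F'}\subseteq D_F$ fails for the natural substitutes: a point in the inner disk has mod-$2$ crossing number $0$ with (outer circle $+$ inner circle), even though it is enclosed by the outer curve. Consequently both the evaluation of $\sum_{F\in S}w_{F_0}(C_F)$ and the injectivity of $F\mapsto D_F$ invoked in your maximality step break down exactly in the cases your closing remark waves away.

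The defect is in this geometric step, not in the overall strategy: a mod-$2$ intersection pairing can be made to work, for instance by pairing against the full boundary chains $\partial F$ (whose winding number about a point $p\notin G_r$ is precisely the indicator of $p\in F$) and then relating $C_F$ to $\partial F$ plus its inner boundary components, or by ordering faces by the filled region of their outer boundary component and proving triangularity of the pairing matrix for that order --- but either repair requires an argument that explicitly accommodates non-simple boundary walks. For comparison, the paper avoids the issue entirely: it uses Alexander duality $\homo_1(G_r)\cong\tilde{\homo}_0(\pi_r\setminus G_r)$, takes the bounded faces as a basis of the reduced $\homo_0$ of the complement, and transports that basis through the bijection $F\mapsto C_F$ of Observation~\ref{essen-obs}, together with a separate independence argument for the primary classes, so it never needs primary cycles to be simple closed curves.
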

\begin{proof}
We observe the following facts:
\begin{itemize}
    \item The classes of unoriented primary cycles 
	form a sub-basis of $\homo_1(G_r)$.
	\item $\homo_1(G_r)\cong\tilde{\homo}_0(\pi_r\setminus G_r)$
	where $\tilde{\homo}_0$ denotes the {\em reduced} zero-dimensional 
	homology group.
	\item The faces in $\F_r\setminus F^*$ form a basis of 
	$\tilde{\homo}_0(\pi_r\setminus G_r)$. 
\end{itemize}
	For the first fact, observe that
	the set of such cycles are independent meaning that
	there is no unoriented primary cycle $C_F$ that can be written
	as the sum of other unoriented primary cycles. If it were true,
	let $C_F=C_{F_1}+C_{F_2}+\cdots+C_{F_t}$. Then,
	the boundary of $R=F\cup_{i=1}^t F_i$ is empty. But, that is
	impossible unless $R=\pi_r$. Since $F^*\not\in R$, we
	have $R\not=\pi_r$. 
	
	The second fact follows from Alexander duality because
	$G_r$ is embedded in the plane $\pi_r$. 
	The third fact follows from the
	definition of reduced homology  groups.
	
	Consider a map $\mu$ that sends each
	face $F\in \F_r\setminus F^*$ to its unoriented primary cycle $C_F$.	This map is bijective due to Observation~\ref{essen-obs}. 
	Therefore, by the first and third facts, $\tilde{\homo}_0(\pi_r\setminus G_r)$
	is isomorphic to the summand of $\homo_1(G_r)$ generated by 
	the classes of unoriented primary cycles. 
	Indeed, this summand is $\homo_1(G_r)$ itself
	since  $\tilde{\homo}_0(\pi_r\setminus G_r)$ is isomorphic to 
	$\homo_1(G_r)$ by the second fact.
\end{proof}
The following Proposition complements Proposition~\ref{primary}.
We do not use it, but remark about its connection to Reeb graphs 
at the end.
\begin{proposition}
	The $\homo_0$-classes of unoriented secondary cycles form 
	a basis of $\homo_0(G_r)$.
	\label{secondary}
\end{proposition}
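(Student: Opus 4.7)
The plan is to mirror the strategy of Proposition~\ref{primary} by setting up a bijection between the unoriented secondary cycles of $G_r$ and the connected components of $G_r$. Since the $\homo_0$-classes of one vertex chosen per component form a basis of $\homo_0(G_r)$, and since every boundary walk of a face of $G_r$ is a connected closed walk whose vertices all lie in a single component, each secondary cycle $C$ yields a well-defined class $[v_C]\in\homo_0(G_r)$ for any $v_C\in C$. The bijection will match each component $G_i$ to the boundary walk of the unique unbounded face of $G_i$ viewed as a plane graph on its own.

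To count secondary cycles, let $c$ be the number of components of $G_r$ and let $V$, $E$, $F$ denote its plane-graph parameters with the unbounded face included, so Euler's formula gives $V - E + F = 1 + c$. I would show by induction on $c$ that the total number of boundary walks summed over all faces equals $F + c - 1$: inserting a new component with $f$ internal faces into an existing face of the previously-built subgraph increases $F$ by $f-1$ and increases the total walk count by $f$, preserving the invariant $F+c-1$. Subtracting the $F-1$ primary cycles identified by Observation~\ref{essen-obs} leaves exactly $c$ secondary cycles.

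For the bijection itself, given a component $G_i$, its outer boundary walk is a boundary walk of the face $F_i$ of $G_r$ that lies adjacent to $G_i$ from outside; orienting the walk so that $F_i$ lies on the right places the unbounded region of $\pi_r \setminus C$ on the right, so by Observation~\ref{essen-obs} the walk cannot be primary for $F_i$ and must therefore be secondary. Distinct components give walks on disjoint vertex sets, so the map from components to secondary cycles is injective, and combined with the count it is a bijection. Taking $\homo_0$-classes then identifies the $\homo_0$-class of each secondary cycle with the class of the corresponding component, which together form a basis of $\homo_0(G_r)$.

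I expect the main obstacle to be the combinatorial identity that the total number of boundary walks equals $F + c - 1$, together with the orientation argument for the bijection in the presence of degenerate walks. Tree components, bridges, and cut vertices all produce boundary walks that revisit edges or vertices, so the walks are not simple Jordan curves and do not strictly separate $\pi_r$. Formulating boundary walks as orbits of the face-tracking permutation on oriented darts handles these uniformly, but the inductive step and the ``right-hand side is unbounded'' argument must both be phrased with these degenerate walks in mind.
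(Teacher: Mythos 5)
The paper states Proposition~\ref{secondary} without proof --- it is explicitly noted as unused except for a remark about Reeb graphs --- so there is no in-paper argument to compare against; judged on its own, your counting-plus-bijection argument is essentially correct and is the natural companion to the proof of Proposition~\ref{primary}. Two small repairs. First, the bookkeeping in your induction is off by one if ``$f$ internal faces'' means the bounded faces of the inserted component $H$: since $H$ is connected and disjoint from the previously built subgraph, Euler's formula gives that the number of faces increases by $F_H-1$ and the walk count by $F_H$, where $F_H$ counts \emph{all} faces of $H$ as a standalone plane graph including its unbounded one; with $f$ the number of bounded faces the correct increments are $f$ and $f+1$, not $f-1$ and $f$ (the invariant $F+c-1$ is preserved either way once the increments are stated consistently). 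Second, you implicitly use that a connected plane graph has exactly one face-tracing walk per face; once that is made explicit the induction is unnecessary, because the face-tracing orbits (equivalently, the paper's connection rule) are determined component by component, so the total number of walks is $\sum_i F_i = F+c-1$ directly from Euler's formula. The bijection half is fine: the outer walk of a component has the unbounded component of the complement of its support on its right, hence is not primary in the sense of Observation~\ref{essen-obs}, and distinct components give walks with disjoint supports, so together with the count each component carries exactly one secondary cycle, whose $\homo_0$-class is that of its component; these classes therefore form a basis. (One can also bypass counting entirely: every boundary walk of a face $F$ is either the unique walk surrounding $F$ --- the primary one, which exists only for bounded $F$ --- or the outer walk of a component nested directly inside $F$, which gives the bijection outright.) A last pedantic point, which the paper also glosses over: the statement needs every component of $G_r$ to contain at least one edge, which holds here because $\K$ is assumed pure, so an isolated-vertex component cannot arise in the level set graphs.
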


\vspace{0.1in}
\noindent
{\bf Representing level set graphs.}
Proposition~\ref{primary} implies that we can maintain a basis
of $\homo_1(G_r)$ by maintaining the primary cycles alone. However, for
realizing the zigzag maps that connect across the level sets (Eqn.~\ref{eqn-zigzag}), we need a different
basis involving both primary and secondary cycles. For each bounded face $F\in \F_r\setminus F^*$, let $\partial F=C_F+\sum_i C_i$ be the {\em boundary cycle} for the face $F$ which is the $\Z_2$-addition of the primary cycle $C_F$ with the secondary ones in $F$. The next assertion follows from Proposition~\ref{primary} immediately.
\begin{proposition}
The classes $\{[\partial F]\,|\, F\in \F_r\setminus F^*\}$	form a basis of $\homo_1(G_r)$.
\label{primary-secondary}
\end{proposition}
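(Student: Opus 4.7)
The plan is to deduce this directly from Proposition~\ref{primary} by exhibiting an invertible triangular change-of-basis over $\Z_2$ between the primary cycles $\{[C_F]\}$ and the cycles $\{[\partial F]\}$.

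First I would make the structural observation that each secondary cycle appearing as a connected component of $\partial F$ is itself the primary cycle of a uniquely determined other bounded face. If $C$ is such a secondary component of $\partial F$, then $C$ separates $\pi_r$ into two regions; the one not containing $F$ is bounded, and the face $F'$ immediately adjacent to $C$ on that side has $C$ as its outer boundary, so by Observation~\ref{essen-obs} we have $C = C_{F'}$. Consequently, in $\homo_1(G_r)$,
\[
[\partial F] \;=\; [C_F] \;+\; \sum_{F' \in S_F} [C_{F'}],
\]
where $S_F \subseteq \F_r \setminus \{F,F^*\}$ collects the faces whose primary cycles appear as secondary components of $\partial F$.

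Next I would introduce a strict partial order on $\F_r \setminus F^*$ by declaring $F' \prec F$ whenever $C_{F'}$ lies in the bounded region of $\pi_r \setminus C_F$. This relation is irreflexive and transitive by planarity, and by the previous step every element of $S_F$ satisfies $F' \prec F$. Ordering the bounded faces along any linear extension of $\prec$, the display above expresses each $[\partial F]$ as $[C_F]$ plus a $\Z_2$-linear combination of strictly $\prec$-smaller primary-cycle classes; thus the transition matrix from the basis $\{[C_F]\}$ to the family $\{[\partial F]\}$ is triangular with $1$'s on the diagonal, hence invertible. It follows that $\{[\partial F] \mid F \in \F_r \setminus F^*\}$ is also a basis of $\homo_1(G_r)$.

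The only nontrivial step---and the natural place where planarity enters---is the first one, identifying each secondary boundary component of $F$ with the primary cycle of a unique other bounded face; this rests on Jordan-curve separation in $\pi_r$ together with the uniqueness of primary cycles guaranteed by Observation~\ref{essen-obs}. Everything after that is a routine triangular-matrix argument.
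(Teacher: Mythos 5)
Your overall strategy (triangular change of basis from the primary-cycle basis of Proposition~\ref{primary}) is reasonable, but the step you yourself identify as the crux is false as stated. A secondary component of $\partial F$ need not be a simple closed curve, need not separate $\pi_r$, and need not be the primary cycle of any face. Two concrete situations that arise for level-set graphs of $2$-complexes in $\real^3$: (1) a tree component of $G_r$ (e.g.\ a single dangling segment coming from boundary triangles) lying inside a bounded face $F$ contributes to $\partial F$ a closed walk traversing each of its edges twice, which is $0$ as a $\Z_2$-chain and separates nothing, so there is no face $F'$ with $C=C_{F'}$; (2) a ``bowtie'' (two triangles sharing a vertex of $G_r$) inside $F$ contributes as secondary component the outer walk of the bowtie, which equals the $\Z_2$-sum of the primary cycles of \emph{two} distinct faces, again not a single $C_{F'}$, and its complement has three regions, so the Jordan-curve dichotomy you invoke does not apply. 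Thus the identity $[\partial F]=[C_F]+\sum_{F'\in S_F}[C_{F'}]$ with each $F'\prec F$ is not established by your argument; it is true, but proving it requires something like a mod-$2$ crossing-parity (winding) argument showing that the primary-basis expansion of each secondary component only involves faces strictly enclosed by $C_F$, which is precisely the content you skipped. With that repair your triangularity argument does go through.

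For comparison, the paper derives the proposition ``immediately'' from Proposition~\ref{primary}, and the intended short route avoids nesting altogether: the bounded faces are in bijection with the primary cycles, so $|\F_r\setminus F^*|=\dim\homo_1(G_r)$, and the chains $\{\partial F\}_{F\in\F_r\setminus F^*}$ are independent by the same region argument used in the proof of Proposition~\ref{primary} --- if $\sum_{F\in S}\partial F=0$ for a nonempty $S\subseteq\F_r\setminus F^*$, then every edge of $G_r$ is incident (with multiplicity) an even number of times to faces of $S$, so the union of the closures of the faces in $S$ has empty boundary and must be all of $\pi_r$, contradicting $F^*\notin S$. Since $\homo_1(G_r)$ of a graph is its cycle space, independence of the chains plus the dimension count gives the basis. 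I would either adopt that argument or supply the missing lemma for your triangular approach.
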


The importance of the boundary cycles in realizing the zigzag maps needed for the persistence module in Eqn.~\ref{eqn-zigzag} is due to the following observation.

\begin{observation}
For every $i\in\{1,\ldots,m-1\}$ and for every boundary cycle $\partial F$ in the intermediate level $s_i$,
there are sum of boundary cycles $\sum\partial F_{i_j}$ and $\sum\partial F_{(i+1)_j}$ at the critical levels $a_i=z(v_i)$ and $a_{i+1}=z(v_{i+1})$ respectively with $a_{i}<s_i<a_{i+1}$ so that the inclusions of $\partial {F_{i_j}}, \partial F$, and $\partial {F_{(i+1)_j}}$ into the interval space $|\K|_{[a_{i},a_{i+1}]}$ induce linear maps at the homology levels given by $[\partial F]\rightarrow [\sum\partial F_{i_j}]$, $[\partial F]\rightarrow [\sum\partial F_{(i+1)_j}]$.
\end{observation}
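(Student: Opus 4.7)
The plan is to construct, for each boundary cycle $\partial F$ at level $s_i$, an explicit $2$-chain $\Sigma_-\subset|\K|_{[a_i,s_i]}$ whose $\Z_2$-boundary has the form $\partial F+C_-$ for some $1$-cycle $C_-$ supported on $G_{a_i}$. Proposition~\ref{primary-secondary} then expresses $[C_-]=[\sum_j\partial F_{i_j}]$ in $\homo_1(G_{a_i})$ as a sum of boundary cycles, and pushing this forward along the inclusion $|\K|_{[a_i,s_i]}\hookrightarrow|\K|_{[a_i,a_{i+1}]}$ yields the first identity $[\partial F]=[\sum_j\partial F_{i_j}]$ in $\homo_1(|\K|_{[a_i,a_{i+1}]})$. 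The second identity is obtained by a symmetric sweep from $s_i$ up to $a_{i+1}$.

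To construct $\Sigma_-$, first I would exploit the product structure on the open slab $(a_i,a_{i+1})$: since no vertex of $\K$ has $z$-value in this interval, the restriction of $z$ to $|\K|_{(a_i,a_{i+1})}$ is a trivial fiber bundle with fiber $G_{s_i}$, and every face $F\in\F_{s_i}$ extends to a continuous family of faces $F(t)$ of $G_t$ for $t\in(a_i,a_{i+1})$ with $\partial F(t)$ a $1$-cycle in $G_t$. For each triangle $T$ of $\K$ on which $G_t\cap T$ is a segment belonging to $\partial F(t)$, let $R_T^\epsilon$ be the region of $T$ swept out by $G_t\cap T$ as $t$ ranges over $[a_i+\epsilon,s_i]$, and set $\Sigma_-^\epsilon=\sum_T R_T^\epsilon$ in $\Z_2$. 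Each ribbon $R_T^\epsilon$ has boundary consisting of a top segment at level $s_i$, a bottom segment at level $a_i+\epsilon$, and two side segments lying on edges of $T$. Because $\partial F(t)$ is a cycle in $G_t$, an even number of ribbons meet along every edge $e$ of $\K$ that $\partial F(t)$ crosses, so the side contributions cancel in $\Z_2$ and $\partial\Sigma_-^\epsilon=\partial F+\partial F(a_i+\epsilon)$.

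Finally, I would pass to the limit $\epsilon\to 0^+$. Ribbons on triangles $T$ with $v_i\notin T$ converge to ribbons whose bottom segment is $G_{a_i}\cap T$, whereas ribbons on triangles in the upper star of $v_i$ degenerate to triangular pieces with apex at $v_i$ and a single-point ``bottom''. The limit $\Sigma_-\subset|\K|_{[a_i,s_i]}$ then satisfies $\partial\Sigma_-=\partial F+C_-$ with $C_-$ a $1$-chain supported on $G_{a_i}$, and $\partial(\partial\Sigma_-)=0$ forces $C_-$ to be a cycle. The main obstacle is precisely this limiting step: the side segments of ribbons along edges of $\K$ incident to $v_i$ shrink to the point $v_i$ rather than to an interior point of an edge, and one must verify that the parity-based lateral cancellation still leaves no residual $1$-chain contribution along such edges. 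This goes through because $\partial F(t)$ remains a $1$-cycle uniformly in $t$, so the number of ribbons meeting along any edge incident to $v_i$ stays even throughout the sweep; once the limit is justified, Proposition~\ref{primary-secondary} followed by inclusion completes the argument.
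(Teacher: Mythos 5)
Your argument is correct in substance; note that the paper states this as an unproved observation, so there is no official proof to compare against, and the justification it implicitly relies on is the standard one for tame level-set zigzags: since $a_i$ is the only critical value in $[a_i,s_i]$, the product structure over the open slab makes $|\K|_{[a_i,s_i]}$ deformation retract onto the critical level $G_{a_i}=z^{-1}(a_i)$, so the image of $[\partial F]$ in $\homo_1(|\K|_{[a_i,s_i]})\cong\homo_1(G_{a_i})$ is some class, which by Proposition~\ref{primary-secondary} is a sum of boundary-cycle classes; including further into $|\K|_{[a_i,a_{i+1}]}$ and arguing symmetrically above $s_i$ gives the statement. Your ribbon construction is a correct chain-level version of that retraction, and it buys explicitness: the mod-2 cancellation of lateral segments is exactly the statement that every graph vertex of $\partial F(t)$ has even degree, and this count is constant on $(a_i,a_{i+1})$ because no vertex of $\K$ lies in the open slab; along a complex edge incident to $v_i$ each incident ribbon contributes the full portion of that edge between levels $a_i$ and $s_i$, an even number of times, so no residual $1$-chain survives, while the bottom contributions are full edges of $G_{a_i}$ (triangles having $v_i$ as their lowest vertex contribute only the point $v_i$), so $C_-$ is indeed a cycle of the graph $G_{a_i}$ to which Proposition~\ref{primary-secondary} applies. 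Two simplifications are worth making: the $\eps\to 0^+$ limit is unnecessary, since you can sweep the closed slab $[a_i,s_i]$ directly and take as ribbons the closed polygonal regions of the relevant triangles between the two levels, which removes the only step you flagged as delicate; and since $G_{a_i}$ is a graph, there are no nontrivial $1$-boundaries, so ``expressing $[C_-]$ in the basis'' just means writing the cycle $C_-$ as a mod-2 sum of the chains $\partial F_{i_j}$.
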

By Proposition~\ref{primary-secondary} and the above observation, the zigzag maps of the persistence module in Eqn.~\ref{eqn-zigzag} can be tracked if we track the boundary cycles for each face. However, this requires additional bookkeeping for maintaining the primary and secondary cycles of a face together. Instead, we maintain each individual primary and secondary cycle independently being oblivious to their correspondence to a particular face though this information is maintained implicitly. Due to Proposition~\ref{primary}, it becomes sufficient to register the changes in the primary cycles for tracking the boundary cycles.

The primary and secondary cycles change as we sweep over vertices. 
Figure~\ref{levelset} illustrates some of these changes.
A secondary cycle may split into two cycles
one of which is primary and the other is not ($C$ in Fig.), it may 
split into two secondary cycles ($Z$ in Fig.), or two primary cycles
may merge ($D_1$, $D_2$ in Fig.). 
Therefore, we need to maintain all
oriented cycles in $\partial\overrightarrow{F}$, and keep track
of the primary ones among them.

\begin{figure*}[ht!]
	\begin{center}
		\includegraphics[width=0.95\textwidth]{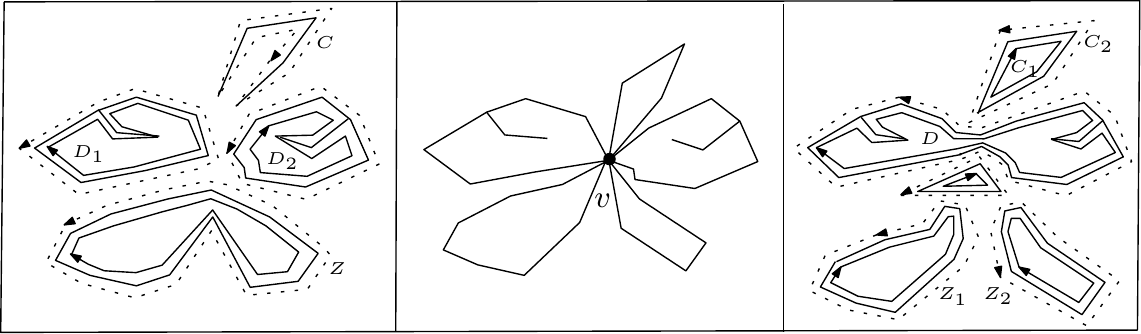}
	\end{center}
	\caption{Level set graph going through changes after sweeping through
		a vertex: $G_{s_{i-1}}$ (left), $G_{a_i}$ (middle), and $G_{s_i}$ (right). The primary
		and secondary cycles 
		%in $\overrightarrow{G}_-$ and
%		$\overrightarrow{G}_+$ 
are indicated with solid and dotted curves
		respectively. Notice how the secondary cycle $C$ on the left %$\overrightarrow{G}_-$}
		got first pinched and then split into one primary 
		cycle $C_1$ and
		another secondary cycle $C_2$ on the right.}
	%	in $\overrightarrow{G}_{+}$}
	\label{levelset}
\end{figure*}
We consider a directed version $\overrightarrow{G_r}=(V_r,\overrightarrow{E_r})$
of $G_r$ where each edge $e\in E_r$ is converted into two 
directed edges in $\overrightarrow{E_r}$ that are oriented oppositely.
The graph $\overrightarrow{G_r}$ is represented with a set
of oriented cycles 
$C(\overrightarrow{G_r})=\cup_{F\in \F_r}\partial \overrightarrow{F}$
that bound the faces in $\F_r$ on right.
These cycles are represented with a sequence of 
directed edges.
\piccaption{Connection rules.\label{connect-pic}}
\parpic[r]{\includegraphics[height=2.5cm]{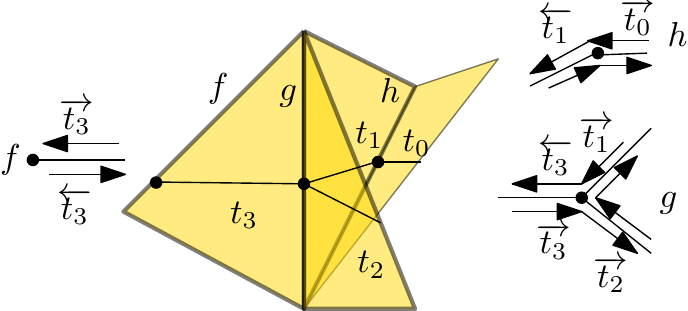}}
A vertex in $V_r$ either lies on a vertex $v\in \K$, or
in the interior of a complex edge $e$ in which case we denote it
as the vertex $e\in V_r$. Any edge in $E_r$ is an intersection of
the level set with a complex triangle $t$, which we also
denote as an edge $t\in E_r$. 
Let $t\in E_r$ be any edge adjoining a vertex $e\in V_r$.
We have two directed copies $\overrightarrow{t}$ and
$\overleftarrow{t}$ of $t$ in $\overrightarrow{G_r}$.
Assume that $\overleftarrow{t}$ is directed away from
$e$ and $\overrightarrow{t}$ is directed toward $e$. 

We follow a {\bf connection rule} for deciding 
the connections among the directed edges around $e\in V_r$ to construct the
cycles in $C(\overrightarrow{G_r})$ as follows. Let $d$ and $d'$
be a pair of 
directed edges, where the head of $d$ is the tail of $d'$. The directed
path $dd'$ locally separates the plane around the meeting point
of $d$ and $d'$. The region to the right of $dd'$ is called its 
{\em right wedge}, and the region to the left is called its {\em left wedge}. 
We have three cases for deciding the connections:
\begin{itemize}
\item $e$ has only one edge $t=t_0$ ($f$ in Figure~\ref{connect-pic}): connect 
$\overrightarrow{t}$ to $\overleftarrow{t}$.
\item $e$ has exactly two edges $t_0$ and $t_1$ ($h$ in Figure~\ref{connect-pic}): connect 
$\overrightarrow{t_0}$ to $\overleftarrow{t_1}$, and connect
$\overleftarrow{t_0}$ to $\overrightarrow{t_1}$.
\item $e$ has three or more edges ($g$ in Figure~\ref{connect-pic}): 
consider a circular order of
all edges $t\in E_r$ adjoining $e\in V_r$. 
Let $t_0,t_1,\ldots,t_k,t_0$
be this circularly ordered edges around $e$. 
For any consecutive pairs of edges $t_i$, $t_{(i+1)\Mod{k}}$,
determine if the right wedge of $\overrightarrow{t_i}\overleftarrow{t}_{(i+1)\Mod{k}}$
contains the edge $t_{(i-1)\Mod{k}}$. If so, connect
$\overrightarrow{t}_{(i+1)\Mod{k}}$ to $\overleftarrow{t_i}$.
If not, connect $\overrightarrow{t_i}$ to
$\overleftarrow{t}_{(i+1)\Mod{k}}$.
\end{itemize}

The choice of our orientations and connections
leads to the following observation:

\begin{observation}
	Let $(d,d')$ be any pair of directed edges in $\overrightarrow{G_r}$.
	They are consecutive directed edges 
	on the oriented boundary 
	of a face $F\in \F_r$ if and only if  
	$d$ connects to $d'$ by the connection rule
	around some vertex $e\in V_r$.
\end{observation}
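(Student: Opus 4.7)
The plan is to verify the equivalence by case analysis on the degree of $e \in V_r$ in $G_r$, matching the three cases of the connection rule. The common underlying principle is a standard fact about plane graphs: at a vertex $e$, the local face-incidences correspond bijectively with the angular wedges between consecutive edges of $E_r$ in the rotation around $e$, and the boundary of a face $F \in \F_r$ traversed with $F$ on the right (i.e.\ as $\partial\overrightarrow{F}$) visits $e$ by arriving along one of the two edges bounding the corresponding wedge and leaving along the other. It will therefore suffice to show, in each case of the rule, that the prescribed connections enumerate precisely the (arriving, leaving) directed-edge pairs produced by the local faces at $e$.

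For a vertex $e$ of degree one, the single edge $t_0$ bounds a unique local face whose boundary makes a U-turn around the pendant vertex, yielding the connection $\overrightarrow{t_0}\to\overleftarrow{t_0}$ prescribed by the rule. For degree two, the two edges $t_0, t_1$ separate two local faces on opposite sides; tracing each with the face on the right produces exactly the two connections $\overrightarrow{t_0}\to\overleftarrow{t_1}$ and $\overrightarrow{t_1}\to\overleftarrow{t_0}$ that the rule prescribes. Both sub-cases are straightforward and handle both implication directions simultaneously.

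The main work, and the principal obstacle, is the degree $\geq 3$ case, where a careful orientation argument is required. For each consecutive pair $t_i, t_{i+1}$ in the chosen circular order, the two candidates are $\overrightarrow{t_i}\to\overleftarrow{t_{i+1}}$ and $\overrightarrow{t_{i+1}}\to\overleftarrow{t_i}$, and one must show the rule's geometric test selects the one that traces the local face between $t_i$ and $t_{i+1}$ with that face on the right. The plan is to first establish that the right wedge of the path $\overrightarrow{t_i}\overleftarrow{t_{i+1}}$ is always the angular wedge at $e$ swept counterclockwise from $t_i$ to $t_{i+1}$ in $\pi_r$, independently of the choice of circular order, since its definition depends only on which ray is labeled $t_i$ and which $t_{i+1}$. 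If the chosen circular order happens to be CCW in $\pi_r$, this right wedge coincides with the local face between $t_i$ and $t_{i+1}$; if the order is CW, the right wedge is the complementary wedge instead. Because a local face wedge contains no other edges of $E_r$ incident to $e$, whereas the complementary wedge contains every remaining edge (in particular $t_{i-1}$), the presence of $t_{i-1}$ in the right wedge detects precisely the CW case. The rule's branching on this test then issues the connection whose face-on-right traversal recovers the boundary of the actual local face between $t_i$ and $t_{i+1}$, completing the verification of both implication directions.
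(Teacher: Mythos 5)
Your proposal is correct. Note that the paper supplies no proof of this statement at all—it is asserted as an immediate consequence of how the orientations and the connection rule were designed ("The choice of our orientations and connections leads to the following observation")—so your argument is best read as the formalization of that implicit justification rather than an alternative to an existing one. Your key steps are exactly the right ones: identifying the right wedge of $\overrightarrow{t_i}\,\overleftarrow{t}_{i+1}$ with the angular sector swept counterclockwise from $t_i$ to $t_{i+1}$ at $e$, using the standard fact that face corners at $e$ are precisely the wedges between rotation-consecutive edges and that a face-on-the-right traversal enters along one bounding edge of the corner and exits along the other, and—most importantly—recognizing that the test "does the right wedge contain $t_{(i-1)\Mod{k}}$" serves to detect whether the chosen circular order is clockwise or counterclockwise in $\pi_r$, which is why the rule branches on that test instead of presupposing an orientation of the circular order; the degree-one and degree-two cases are as routine as you indicate (the paper's phrasing "connect $\overleftarrow{t_0}$ to $\overrightarrow{t_1}$" in the degree-two case should be read, as you do, as the pairing $\overrightarrow{t_1}\rightarrow\overleftarrow{t_0}$ at $e$).
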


The observation above relates the directed cycles 
in $C(\overrightarrow{G_r})$ with a local 
connection rule. We exploit this fact to update the cycles 
locally in our algorithm.

\vspace{0.1in}
\noindent
{\bf Cycle trees.}
The directed cycles in $C(\overrightarrow{G_r})$ are represented with
balanced trees that help implementing certain operations on them efficiently. 
We explain this data structure now.

A directed edge $d$ where $d=\overleftarrow{t}$ or 
$d=\overrightarrow{t}$ is represented with
a node $d$ that has three fields; $d\cdot \tri$ points to the complex
triangle $t$, $d\cdot \edgt$ and $d\cdot \edgh$ point to the complex edges
$e_1$ and $e_2$ respectively where $d$ is directed from $e_1$ to $e_2$.
A cycle $C$ of directed edges is  
represented with a balanced tree $\tree_C$ , namely a 2-3 tree~\cite{AHU74} 
where the directed edges of $C$ constitute the leaf nodes of $\tree_C$ with
the constraint that the leaves of any subtree of $\tree_C$ represent
a path (directed) in $C$. The leaves of $\tree_C$ are joined with
a linked list in the order they appear on the directed cycle $C$.
A pointer $d\cdot\nxt$ in a leaf node $d$ implements this link list. 
The node $d$ also maintains another pointer $d\cdot\prev$ to access the 
previous node on the linked list in $O(1)$ time. However, it is important
to keep in mind that it is the $\nxt$ pointers that provide the orientation
of the cycle $C$. Furthermore, the last node in both linked lists 
connected  by
$\nxt$ and $\prev$ pointers respectively is {\em assumed} to connect
to the first one. This creates the necessary circularity without
actually making the list circular. We denote the linked list of
leaves of a tree $\tree$ as $\lst(\tree)$.
The 2-3 trees built on top of the paths support the following
operations.

\vspace{0.1in}
\noindent
{\sc find}($d$): returns the root of the tree $d$ belongs to.\\
{\sc split}($T,d$): \parbox[t]{4.5in}{splits a tree $T$ into two 
	trees $T_1$ and $T_2$
	where $\lst(T_1)$ is the sublist of $\lst(T)$ that
	contains all elements in $\lst(T)$ before $d$,
	and $\lst(T_2)$ is the sublist that contains all elements
	in $\lst(T)$ after and including $d$.}\\ 
{\sc join}($T_1,T_2$): \parbox[t]{4.5in}{takes two trees
	$T_1$ and $T_2$ and produces a single 
	tree $T$ with $\lst(T)$ as the
	concatenation of $\lst(T_1)$ and $\lst(T_2)$ in this order.}\\
{\sc permute}($T,d$): \parbox[t]{4.5in}{makes $d$ the
	first node in the cycle represented with $T$.
	It is implemented by calling 
	{\sc split}($T,d$) that produces $T_1$ and $T_2$, and then
	returning $T:=${\sc join}($T_2$,$T_1$). }\\
%{\sc close}($T$): \parbox[t]{4.5in}{tag $T$ as closed.}\\
{\sc insert}($d,d'$): \parbox[t]{4.5in}{inserts the element $d$ after $d'$ 
	in $\lst(T)$ where $T:=${\sc find}($d'$).}\\
{\sc delete}($d$): \parbox[t]{4.5in}{deletes $d$ from $\lst(T)$ where 
	$T:=${\sc find}($d$).}\\

All of the above operations maintain the trees well balanced allowing traversal of a path from a leaf to the root in $O(\log n)$ time where
$n$ is the total number of elements in the lists of the trees involved. This in turn allows each of these operations to be carried out in $O(\log n)$ time.
Using these basic operations, we implement two key operations,
splitting and merging of cycles.

\vspace{0.1in}
\noindent
{\sc splitcycle}($T,d,d'$): This splits
a directed cycle into two. A cycle may get first pinched and then
splits into more cycles as we sweep through a vertex. This operation
is designed to implement this event. 
Given a tree $T$, it returns two trees $T_1$ and $T_2$
where $\lst(T_1)$ represents the path from $d$ to $d'$ in the directed
cycle given by $\lst(T)$,
and $\lst(T_2)$ represents the path from $d'\cdot \nxt$ to
$d\cdot \prev$ in the same cycle. See Figure~\ref{nullity}, bottom row.

It is implemented as follows: Let $T:=${\sc permute}($T,d$). 
Call {\sc split}($T,d'\cdot\nxt$) which returns two trees $T_1$
and $T_2$ as required. 

\vspace{0.1in}
\noindent
{\sc mergecycle}($d,d'$): This merges the two cycles that 
$d$ and $d'$ belong to.
The new cycle has $d'$ after $d$ and $d\cdot \nxt$ after
$d'\cdot \prev$. This is implemented as follows: 
Let $T_1:=${\sc find}($d$) and $T_2:=${\sc find}($d'$).
Let $T_1:=$ {\sc permute}($T_1,d\cdot\nxt$) and 
$T_2:=${\sc permute}($T_2,d'$).
Then, return $T:=$ {\sc join}($T_1,T_2$).

\begin{figure*}[ht!]
	\begin{center}
		\includegraphics[width=0.95\textwidth]{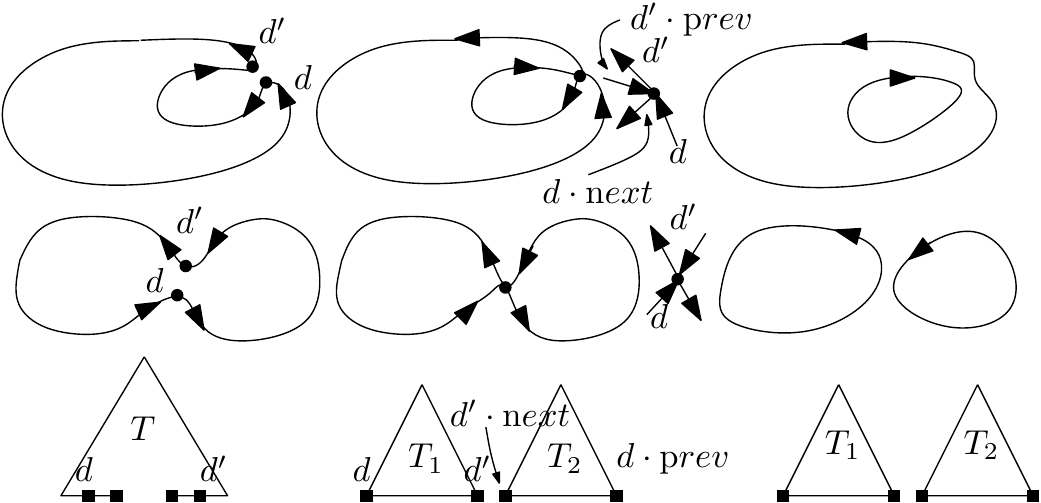}
	\end{center}
	\caption{Secondary cycle splitting: split at the top generates
		one primary and another secondary cycle; split at the bottom generates
		two secondary cycles.}
	\label{nullity}
\end{figure*}

\section{Updating level sets}
Now we describe how we update the graph 
$\overrightarrow{G}_{s_{i-1}}$ to 
$\overrightarrow{G}_{a_i}$ 
and then to $\overrightarrow{G}_{s_i}$.
As we sweep through $v_i$, only the cycles in these graphs
containing a vertex on a complex edge with $v_i$ as an endpoint may change combinatorially.
We only update the cycles for combinatorial changes 
to make sure that the
combinatorics of the level set graphs are maintained correctly though their
geometry is updated only when needed to infer the 
correct adjacencies. This allows us to inspect only $O(n_{v_i})$ simplices
where $n_{v_i}$ is the number of simplices adjoining $v_i$ in $\K$. Summing over
all vertices, this provides an $O(n)$ bound which gets multiplied with
the $O(\log n)$ complexity for the tree
operations that we perform for each such simplex. Also, local
circular sorting of $O(n_{v_i})$ edges around each vertex $v_i$ and complex edges connected to it accounts for $O(n\log n)$ time in total.

\vspace{0.1in}
\noindent
{\bf Primary cycle detection.}
The cycles in $\overrightarrow{G}_r$ that change combinatorially may 
experience splitting, merging, edge contraction, edge expansion, or 
a combination of such events. Specifically, during splitting and merging,
new cycles are generated which need to be characterized as primary or not.
Figure~\ref{nullity} illustrates two cases of a secondary cycle
splitting. Two similar cases arise for the primary cycle splitting. 
For merging also we have four cases mirroring the splitting case.
It turns out that we can determine if the new cycles are primary or not by the orientations of the edges around the `pinching' vertex if we know the type (primary or not) of the original cycles.
We explain this for the case of splitting.

Let $C$ be a cycle in $C(\overrightarrow{G}_{s_{i-1}})$ which
splits at $v_i$. Let $d$ and $d'$ be any two non-consecutive
directed edges in $C$
that meet at $v_i$ in $\overrightarrow{G}_{a_i}$. 
Assume that we know that $C$ is secondary. 
The case when $C$ is primary is similar.
We need to distinguish the case  when one of the two new cycles
nests inside the other. This can be checked in $O(1)$ time by determining
if the right wedge of $d(d\cdot\nxt)$ contains $d'$ or not.
If not, both new cycles remain secondary.
Otherwise, we have a nesting, and exactly one of the two new cycles
becomes primary. We can determine again which of the two becomes
primary in $O(1)$ time. For this consider a ray with tail at $v$
and entering the left wedge of $d(d\cdot\nxt)$. If this
ray enters the left wedge of $dd'$, we declare the new cycle
containing $d$ and $d'$ to be secondary and the other cycle
containing $d'\cdot\prev$ and $d\cdot\nxt$ to be primary.
If the ray enters the right wedge, we flip the assignment for the type of the two new cycles.

With these $O(1)$ local checks,
we design the two routines below that decide the type of
the new cycle(s) in both the splitting and merging cases assuming that we know if
the input cycle(s) are primary or not.

\vspace{0.1in}
\noindent
{\sc splitPrim}($\bool$,$d_1$,$d_2$): This routine assumes that
$\bool$ indicates if the cycle
to be split which contains $d_1$ and $d_2$ is primary (true or false), 
and returns a pair $(\bool_1,\bool_2)$
of booleans where $\bool_i$ is true if and only if the new cycle
containing $d_i$ is primary.

\vspace{0.1in}
\noindent
{\sc mrgPrim}($\bool_1$,$\bool_2$,$d_1$,$d_2$): This routine
assumes that the input boolean variables $\bool_i$ indicates if the cycle containing $d_i$ is primary, and returns
a boolean variable $\bool$ which is true if and only if the
new merged cycle is primary.

\vspace{0.1in}
Now we describe the actual updates of the graphs when the
sweep goes through a vertex $v_i\in \K$. For convenience,
we designate a complex triangle as {\em top, middle}, or {\em bottom} if it 
has $v_i$ as the lowest, middle, or highest vertex respectively
w.r.t. the height $z$.
Similarly, a complex edge is called {\em top, or bottom} if
it has $v_i$ as the lowest or highest vertex respectively.
As we continue with the sweep, we keep on recording the birth, death,
splitting and merging of primary cycles by creating a barcode
graph. {\it Current} primary cycles are represented by {\it current} edges in the barcdoe graph whose one endpoint is already determined, but the other one is yet to be determined. The nodes in the barcode graph are created when a primary cycle is born, dies, splits, or merges with another cycle. It is important to note that the nodes of the barcode graph are created only at the intermediate levels $s_i$. Each tree $T$ maintains  
a pointer $T\cdot\barcode$ that
points to a current edge in the barcode graph if its cycle is primary. Otherwise, this pointer is assumed to be a
null pointer. 
%They are described
%in Appendix~\ref{app:barcode}.
Additionally, we assume that there is a boolean field $T\cdot\essential$
which is set true if and only if $T$ represents a primary cycle. The barcode graph at level $r$ is denoted $B(G_r)$. 
As we move from level $r$ to the next level $r'$, we keep
updating this barcode graph by recording the birth, death,
splitting and merging of primary cycles and still denote it as $B(G_{r})$ till we finish processing level $r'$ at which point we denote it as $B(G_r')$.

\vspace{0.1in} 
\noindent
{\bf Updating $\overrightarrow{G}_{s_{i-1}}$ to $\overrightarrow{G}_{a_i}$.}
The combinatorics of $G_{s_{i-1}}$ change only by the edges $t\in E_{s_{i-1}}$ where
$t$ is a bottom or middle triangle. If the edge $t$ has both vertices
on bottom complex edges, then $t$ is contracted to $v_i$ in $G_{a_i}$. Otherwise,
the edge $t$ remains in $G_{a_i}$, but its adjacency at the vertex which 
becomes $v_i$ in $G_{a_i}$ changes. Also, in both cases classes in
$\homo_1(\lev(z,|\K|))$ may die or be born. We perform the combinatorial changes
and detect the birth and deaths of homology classes
as follows:

\vspace{0.1in}
\noindent
{\it Contracting edges:}
%Every edge $t\in E_-$ where the complex triangle $t$ has $v$ as the
%top vertex contracts to the single point of $v$ in the level set graph
%$G_v=\K_{z(v)}$. We first carry out these contractions to update $G_-$
%to $G_v=(V_v,E_v)$.
When we contract edges, a cycle may simply contract and nothing else happens. 
But, we may also detect that a primary cycle of three edges
is collapsed to two directed edges corresponding to a single
undirected edge. This indicates a death of a class 
in $\homo_1(\lev(z,|\K|))$ which occurs entering 
the level $a_i$ but not exactly at $a_i$. So, we operate as follows.

Let $t$ be any bottom complex triangle for $v_i$, and let
$d_1$ and $d_2$ be two directed edges associated with $t$.
Let $T_1:=${\sc find}($d_1$) and $T_2:=${\sc find}($d_2$).
For $i=1,2$, we call {\sc delete}($d_i$). If $T_i\cdot \essential=\tru$
and $T_i$ has two leaves, 
we terminate the current edge pointed by $T_i\cdot \barcode$ with a closed node at level $s_i$ in $B(G_{s_{i-1}})$ and remove $T_i$ completely. The closed node indicates that the cycle dying entering the level $a_i$ is still alive at the level $s_{i-1}$.

\vspace{0.1in}
\noindent
{\it Cycle updates:}
The edges of a cycle in $C(\overrightarrow{G}_{s_{i-1}})$ can come together
at $v_i$ to create new cycles. After the edge contractions, the only edges that we need to
update for possible combinatorial changes correspond to middle
complex triangles. 
Let $t$ be such a triangle
and let $g$ and $h$ be its edges that are top and bottom edges for
$v_i$ respectively. For each directed edge $d$ with 
$d\cdot \tri=t$, we update $d\cdot \edgt=g$ 
or $d\cdot\edgh=g$ if 
originally we had $d\cdot \edgh=h$ or $d\cdot\edgt=h$ 
respectively.  
%Observe that this is consistent with our assumption that a 
%directed edge $d$ should have vertices on complex edges 
%at or above its bottom vertex. 

\begin{figure*}[ht!]
	\begin{center}
		\includegraphics[width=0.95\textwidth]{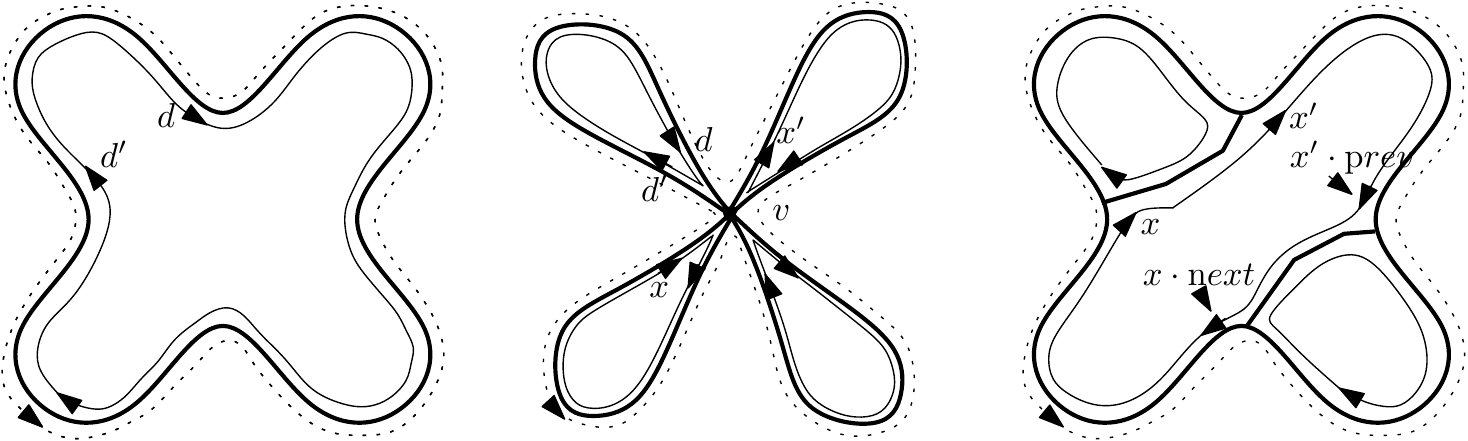}
	\end{center}
	\caption{A primary cycle (left)
	%in $\overrightarrow{G}_-$ (left) 
		splits into four primary cycles (middle),
%		in $\overrightarrow{G}_v$ (middle), 
		then two cycles merge into one
	%	in $\overrightarrow{G}_+$ 
		where the other two only expand (right) when we sweep through vertex $v_i$.}
	\label{split}
\end{figure*}

Next, we update the cycles that may combinatorially
change due to splitting or merging at $v_i$, and also record 
new births as a result.
We consider every directed edge
$d$ so that the triangle $t=d\cdot\tri$ 
is a middle triangle and determine a circular order of their
undirected versions around $v_i$. For every such
directed edge $d$, we determine its pair directed edge
$d'$ using the {\em connection rule} that we described before.
Observe that plane embedding of the level set graph is used here.
Actually, the lack of such canonical ordering of edges around a vertex for level set graphs becomes the roadblock for extending
this algorithm to persistence of functions that are not heights.
Let $T:=${\sc find}($d$) and $T':=${\sc find}($d'$).
We have two cases: the splitting case when $T=T'$ (see
$d$ and $d'$ in Figure~\ref{split}) and the
merging case when $T\not = T'$.

\vspace{0.1in}
\noindent
Splitting Case, $T=T'$: If $d'=d\cdot\nxt$, the cycle 
$C$ containing $d$ and $d'$ and represented by $T$ 
does not change and we do nothing. 
Otherwise, the cycle $C$ splits into two new cycles whose
type needs to be determined.
So, we call {\sc splitPrim}($T\cdot\essential$,$d\cdot\nxt$,$d'$)
which returns a pair of boolean values $(\bool_1,\bool_2)$
indicating if the two new cycles are primary or not.
We split $T$ to create the representations of the two new cycles.
But, this operation destroys $T$ whose
type (primary or not) and barcode pointer are needed for assigning
the same for the two new trees. So, we save 
$b:=T\cdot \barcode$ and $s:=T\cdot\essential$ first, and 
call {\sc splitcycle}($T$,$d$,$d'$) which
returns two trees $T_1$ and $T_2$ representing the
two cycles. Geometric constraints
allow only the following two cases:

\vspace{0.1in}
\noindent
Case(i): $(s,\bool_1,\bool_2)=(\fal,\tru,\fal) \mbox { or }
(\fal,\fal,\tru)$:
A new primary cycle is born at the level $z(v_i)$. 
This is an open-ended birth at the level $s_{i-1}$ because the cycle exists at the level $z(v_i)$ but not at the level $s_{i-1}$.
If $\bool_i=\tru$, we set 
$T_i\cdot\essential:=\tru$
and $T_{i \Mod 2 +1}\cdot\barcode:=null$, 
$T_{i \Mod 2 +1}\cdot\essential:=\fal$.
Then, we set $T_i\cdot\barcode:=b$ where $b$ is a current edge created with an open end at level $s_{i-1}$ in $B(G_{s_{i-1}})$.\\ 
Case(ii): $(s,\bool_1,\bool_2)=(\tru,\tru,\tru)$: A new primary cycle
is born at the level $z(v_i)$ due to a split of the cycle
represented by the saved pointer $b:=t\cdot \barcode$.  
We set $T_i\cdot\essential:=\tru$, 
for $i=1,2$, and call {\sc splitbar}($b$,$v_i$) which splits $b$ at level $s_{i-1}$ and returns two
current edge pointers $\bone_1$ and $\bone_2$.
We set $T_1\cdot\barcode:=\bone_1$ and
$T_2\cdot\barcode:=\bone_2$. 

%\begin{figure*}[ht!]
%\begin{center}
%        \includegraphics[width=0.95\textwidth]{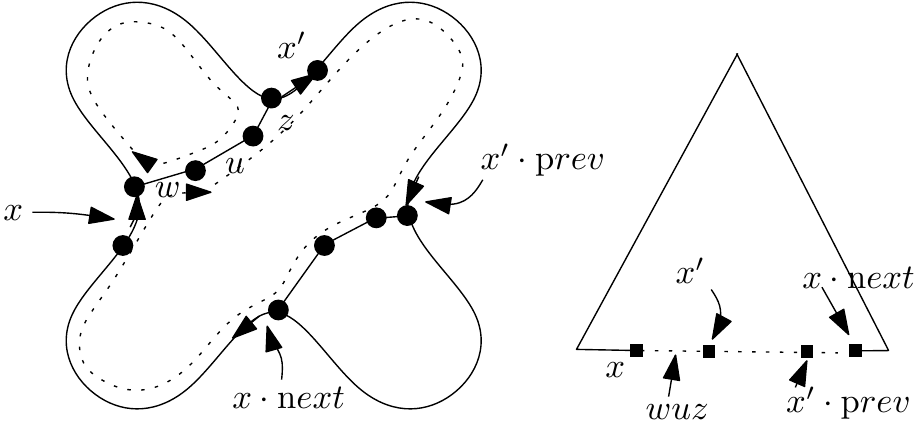}
%\end{center}
%\caption{Merging of the two cycles containing
%directed edges $d$ and $d'$ as shown in Figure~\ref{split} (middle).}
%\label{split-tree}
%\end{figure*}

\vspace{0.1in}
\noindent
Merging Case, $T\not=T'$: two cycles $C_1$ and $C_2$ 
represented by $T_1:=T$ and $T_2:=T'$
respectively merge to become one. As before, we first
store aside the type of $C_1$ and $C_2$ and associated
current edge pointers by
setting $s_i:=T_i\cdot\essential$ and $b_i:=T_i\cdot\barcode$
for $i=1,2$.
Next, we call {\sc mergecycle}($d$,$d'$) which
merges the two cycles containing $d$ and $d'$ and returns a tree
$T_3$ representing this new cycle, say $C_3$. A call to 
{\sc mrgPrim}($s_1$,$s_2$,$d$,$d'$)
returns a boolean variable $\bool$ which is true if 
and only if $C_3$ is primary. 
Again, we have only the following two cases. 

\vspace{0.1in}
\noindent
Case(i): $(s_1,s_2,\bool)=(\tru,\fal,\tru) \mbox { or }
(\fal,\tru,\tru)$.
In this case no primary cycle dies, but the new
cycle remains primary. So, no 
current edge is terminated and the current edge associated
to the primary cycle among $C_1$ and $C_2$ is continued
by $C_3$. If $s_i=\tru$, we set $T_3\cdot\barcode:=b_i$,
$T_3\cdot\essential:=\tru$.\\
Case(ii):$(s_1,s_2,\bool)=(\fal,\fal,\fal)$: No primary cycle
dies and the new cycle is also not primary. 
We set $T_3\cdot\barcode:=null$ and $T_3\cdot\essential:=\fal$.

\vspace{0.1in}
\noindent
{\bf Updating $\overrightarrow{G}_{a_i}$ to $\overrightarrow{G}_{s_i}$.}
To update $\overrightarrow{G}_{a_i}$ to $\overrightarrow{G}_{s_i}$,
we need to create directed edges corresponding to top triangles,
that is, the complex triangles with $v_i$ as the bottom vertex.
These new edges
%\piccaption{Merging lists.\label{merge-pic}}
%\parpic[r]{\includegraphics[height=3.5cm]{split-tree.pdf}}
\begin{figure}[ht!]
\centering{\includegraphics[height=3.5cm]{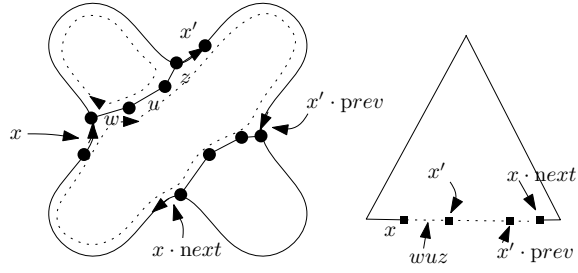}}
\caption{Merging lists.}
\label{merge-pic}
\end{figure}
change the combinatorics of $\overrightarrow{G}_{a_i}$
in four ways: they may (i) expand the existing cycles without
creating or destroying any primary
class, (ii) create
a new cycle giving 
birth to a new class, (iii) 
split a cycle pinched at $v_i$ (it turns out that no 
new class is born in this case), (iv) merge two cycles meeting at $v_i$; in this case, a primary cycle dies. 
Two cycles containing directed edges
$x$ and $x'$ in $\overrightarrow{G}_{a_i}$ in Figure~\ref{split} 
get merged into one cycle in $\overrightarrow{G}_{s_i}$. 
The details of the merge is shown in Figure~\ref{merge-pic}. 
It is preceded by an insertion
of a sequence of edges 
$w,u,z$
that connect $x$ and $x'$.
Similarly, another sequence connects $x'\cdot\prev$ and $x\cdot\nxt$.
%The details of the procedures are described in Appendix~\ref{app:merge}.

{\em Expanding cycles:} We iterate over all directed edges corresponding
to the middle triangles. Let $d$ be any such 
directed edge in $\overrightarrow{G}_{a_i}$. 
The directed edge $d$ belongs to a unique cycle
$C$ in the directed graph $\overrightarrow{G}_{s_i}$. Starting from $d$,
we aim to create the missing edges in $C$.
For this, we create a routine {\sc nextLink}($d$) that takes a directed edge 
$d$ and creates all missing directed edges in $C$ 
that lie between $d$ and the next directed edge $d'$ with
$d'\cdot\tri$ being a middle triangle. 

\noindent
{\sc nextLink}($d$): Consider the complex edge $e=d\cdot \edgh$ and the 
circular order of directed edges of $\overrightarrow{G}_{s_i}$ around $e$. 
In $O(1)$ time we determine the adjacent directed edge $d'$ 
using the connection rule described before.
If $d'$ does not exist already, we create a directed edge for
$d'$ and insert it into
the tree containing $d$ by calling {\sc insert}($d'$,$d$).
Replacing the role of $d$ with $d'$, we continue.
If $d'\cdot\tri$ is a middle triangle, we stop and return $d'$. 

To complete updating $C$ containing the directed edge $d$, we call
{\sc nextlink}($d$) which returns, say $d'$. Let $T:=${\sc find}($d$)
and $T':=${\sc find}($d'$). We have two cases:

\vspace{0.1in}
\noindent
Splitting Case, $T=T'$: If $d'=d\cdot\nxt$ in the graph $\overrightarrow{G}_{a_i}$, then this is a mere expansion of a cycle and we do not make any updates in the barcode graph. Otherwise, we do the same as in the case for updating
$\overrightarrow{G}_{s_{i-1}}$ to $\overrightarrow{G}_{a_i}$ except that the
subcases become different.

\vspace{0.1in}
\noindent
Case(i):  
$(s,\bool_1,\bool_2)=(\tru,\tru,\fal) \mbox { or }
(\tru,\fal,\tru)$.
In this case no new primary cycle is born. So, no new
current edge is created. If $\bool_i=\tru$, we set $T_i\cdot\barcode:=T\cdot\barcode$,
$T_i\cdot\essential:=\tru$
and $T_{{i \Mod 2}+1}\cdot\barcode:=null$, 
$T_{{i \Mod 2}+1}\cdot\essential:=\fal$. 

\vspace{0.1in}
\noindent
Case(ii):$(s,\bool_1,\bool_2)=(\fal,\fal,\fal)$: No primary cycle
is born. We set $T_i\cdot\barcode:=\fal$ and $T_i\cdot\essential:=\fal$.

\vspace{0.1in}
\noindent
Merging Case, $T\not=T'$: Let $C_i$ be represented by $T_i$ where
$T_1:=T$ and $T_2:=T'$. Again, we do the same as in the case of merging
while going from $\overrightarrow{G}_{s_{i-1}}$ to $\overrightarrow{G}_{a_i}$.
The subcases become:

\vspace{0.1in}
\noindent
Case(i): $(s_1,s_2,\bool)=(\tru,\tru,\tru)$: 
Two primary cycles merge to become one. Here one primary class
dies, but we do not know which one. So, we record the merging only.
We join the current edges pointed by $b_1$ and $b_2$ at a node at level $s_i$ and start a new current edge pointed by $\bone$ from that node.
We set $T_3\cdot\barcode:=\bone$, $T_3\cdot\essential:=\tru$.

\vspace{0.1in}
\noindent
Case(ii): 
$(s_1, s_2,\bool)=(\fal,\tru,\fal) \mbox { or }
(\tru,\fal,\fal)$:
A primary cycle dies. So, if $s_i=\tru$ for $i=1$ or $2$, we terminate the current edge
pointed by $b_i$ at level $s_i$ with an open end.  Then,
we set $T_3\cdot\barcode:=null$ and $T_3\cdot\essential:=\fal$.

\vspace{0.1in}
\noindent
{\em New cycles:} Some cycles in $\overrightarrow{G}_{s_i}$ may not arise
from the updates of the old cycles. All of their edges come from the
top triangles that have the vertex $v_i$ as the bottom vertex. These cycles
may introduce new current edges with closed birth at level $s_i$. To create these cycles,
we iterate over all top triangles for which at least one of the two directed edges has not been created yet.
Let $t$ be such a triangle where the directed edge from the complex
edge $e$ to $e'$ has not yet been created. We create the directed
edge $d$ with $d\cdot\tri=t$, $d\cdot\edgt=e$, and $d\cdot\edgh=e'$
and initialize a tree $T$ with it. To complete the cycle $C$ that
$d$ belongs to, we call {\sc nextLink}($d$) which returns
after completing the tree $T$. We check if the new cycle $C$
containing $d$ is primary or not by checking if it contains the
point at infinity. This can be done in $O(n_v)$ time in total for
all such new cycles. If $C$ is primary, 
a current edge $b$ begins with a closed edge end at level $s_i$ in the barcode graph.
So, we set $T\cdot\barcode:=b$ and $T\cdot\essential:=\tru$.
Otherwise, set $T\cdot\barcode=null$ and $T\cdot\essential=\fal$. 

\section{Barcode graph}
After processing the last vertex $v_m$ of $\K$  in the sorted order $v_1,v_2,\ldots,v_m$, we obtain the barcode graph $R=B(G_{s_{m}})$.
It has nodes for the intermediate levels between critical levels of $\K$ 
(levels of vertices of $\K$). 
 Now we proceed to justify why the bars extracted from a modified $R$ are indeed the bars for $\homo_1(\lev(z,|\K|))$.

By considering $R$ as a graph linearly embedded in $\mathbb{R}^3$, we can consider its level set zigzag module with height function $z$. Its vertices have values $s_i$, $i=1,\dots,m-1$ that are linearly interpolated over the edges. Notice that the critical values of $z$ on $R$ are $s_0<s_1<\ldots<s_m$ whereas the same on $\K$ are $a_1<\ldots<a_m$. Writing the interval set $R_{[a_i,a_j]}$ as $R^j_i$, we get the level set zigzag module:
\begin{equation}
\homo_0(\lev(z,R)): \homo_0(R_1^1)\rightarrow \homo_0(R_1^2) \leftarrow \homo_0(R_2^2)\rightarrow
\cdots\rightarrow \homo_0(R_{m-1}^m)\leftarrow \homo_0(R_{m}^m).
\end{equation}
Consider the level set zigzag module $\homo_1(\lev(z,|\K|))$ by putting $p=1$ in (\ref{eqn-zigzag}). Here, the interval sets are $\K_i^j={|\K|_{[s_i,s_j]}}$ (notice the shift in interval sets). We have the level set zigzag module:
\begin{equation}
\homo_1(\lev(z,|\K|)): \homo_1(\K_0^0)\rightarrow \homo_1(\K_0^1) \leftarrow \homo_1(\K_1^1)\rightarrow
\cdots\rightarrow \homo_1(\K_{m-1}^m)\leftarrow \homo_1(\K_{m}^m).
\end{equation}

\vspace{0.1in}
\noindent
{\it Augmeting $R$ with threading:} 
For Proposition~\ref{one-zero-prop} below to be true, we need the homology group of every interval set in the above two modules to be identified with the homology group of the level set at the intermediate value of the vertex. That is, we want $\homo_0(R_i^{i+1})=\homo_0(z^{-1}(s_i))$ and $\homo_1(\K_i^{i+1})=\homo_1(z^{-1}(a_{i+1}))$.
This condition is satisfied for the module for $\K$ because the homology groups for $\K$ do not change except at the critical values $a_i$, $i=1,\ldots, m$. However, this is not true for $R$ because of the open ends of some of the edges. For example, consider a single edge with an open end at the value $s_i$. The homology group $\homo_0(R_i^{i+1})$ in this case has rank $1$ whereas $\homo_0(z^{-1}(s_i))$ has rank $0$ because of the open end node. To remedy this, we consider the reduced homology group $\tilde{\homo}_0(\cdot)$ for $R$ and augment $R$ with an added `thread'. In the above example, if we add a thread with motonic values that attaches to the open end, and then consider the reduced homology group, we get that $\tilde{\homo}_0(R_i^{i+1})\cong \tilde{\homo}_0(z^{-1}(s_i))=0$. Extending this idea, we augment the graph $R$ by adding a `dummy' thread that runs with monotone values in the range $(-\infty,\infty)$ while attaching to every open node of $R$ at every level. See Figure~\ref{thread}. The `dummy' component represented by the thread splits and merge at the `open' degree-1 vertices. Call a degree-1 vertex {\em upward} ($u$ in Figure~\ref{thread}, also see Figure~\ref{barcode-graph}(b)) or {\em downward} ($v$ in Figure~\ref{thread}, also see Figure~\ref{barcode-graph}(b)) if it is connected to a vertex with smaller or larger value respectively. A real component joins with the dummy one at an upward vertex and splits from a dummy component at a downward vertex. The `thread' representing the dummy component turns the open ends into merge or split vertices. 
\piccaption{Threading.\label{thread}}
\parpic[r]{\includegraphics[height=3.5cm]{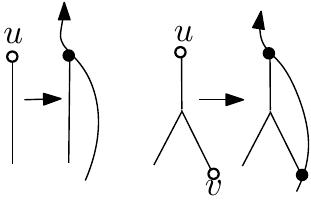}}

The nodes in $R$ at an intermediate level $s_i$ are not supposed to have any edge among themselves. However, because of our construction of $R$, we may have artificial edges between nodes at the same level. See Figure~\ref{barcode-graph}(b), (c). We modify $R$ simply by contracting any such edge (Figure~\ref{barcode-graph}(d)). This operation, carried out in $O(n)$ time, brings all nodes at a fixed level in a connected component formed by artificial edges to a single node at that level. Let $R$ still denote the resulting barcode graph.

\begin{proposition}
$\homo_1(\lev(z,|\K|))\cong \tilde\homo_0(\lev(z,R))$.
\label{one-zero-prop}
\end{proposition}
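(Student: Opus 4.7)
The plan is to exhibit a natural isomorphism between the two zigzag modules by matching bases at every level and checking the connecting maps commute. Proposition~\ref{primary} identifies a preferred basis of $\homo_1(|\K|_{s_i})$ with the primary cycles of the plane graph $G_{s_i}$, while the algorithm maintains, by construction, exactly one ``current edge'' of $R$ for each primary cycle at each intermediate level $s_i$. So at every intermediate level there is a tautological bijection between the primary cycles of $G_{s_i}$ and the non-thread connected components of $R$ at height $s_i$; after the artificial same-level edges have been contracted (as described just before the statement), each such component is a single vertex. Because the threading attaches itself at every open endpoint, the thread is a single distinguished component representing the zero class of $\tilde\homo_0$, so the non-thread components give a basis of $\tilde\homo_0(R\cap z^{-1}(s_i))$. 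This yields vector space isomorphisms $\phi_i:\homo_1(|\K|_{s_i})\to\tilde\homo_0(R\cap z^{-1}(s_i))$.

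Next I would lift these pointwise isomorphisms to the interval sets appearing in the zigzag. For the short interval $\K_i^i=|\K|_{s_i}$ there is nothing to do. For the interval $\K_i^{i+1}$ containing the single critical value $a_{i+1}$, tameness of $z$ lets one deformation retract $\K_i^{i+1}$ onto the mapping-cylinder gluing of $|\K|_{s_i}$ and $|\K|_{s_{i+1}}$ through $|\K|_{a_{i+1}}$, and the corresponding interval $R_i^{i+1}$ of the augmented barcode graph is, by construction, homotopy equivalent to the bipartite graph whose vertices are the primary cycles at $s_i$ and $s_{i+1}$ (plus one thread) with an edge for each surviving or splitting/merging cycle. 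The bijection between primary cycles and non-thread $R$-components then delivers isomorphisms $\phi_i:\homo_1(\K_i^{i+1})\to\tilde\homo_0(R_i^{i+1})$.

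The heart of the proof is to check that these $\phi$'s commute with the inclusion-induced zigzag maps on both sides, i.e., that for every $i$ the diagram obtained by placing $\phi$'s vertically intertwines the forward and backward arrows. This reduces to a case analysis over the events the algorithm handles when sweeping past $v_{i+1}$: contraction of a two-edge cycle (closed-ended death), open-ended birth at a downward/upward vertex, primary split, and primary merge. Using Proposition~\ref{primary-secondary}---which expresses $\homo_1(G_r)$ in the basis $[\partial F]=[C_F+\sum_j C_j]$ that includes forward to sums of adjacent boundary cycles---one verifies in each case that the zigzag map on $\homo_1(|\K|)$ agrees, under $\phi$, with the split/merge/birth/death recorded in $R$. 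For example, Case (ii) of the splitting block (a primary cycle splits into two primary cycles) is recorded by \textsc{splitbar}, producing a degree-three vertex of $R$, and the induced maps $\tilde\homo_0(R_i^i)\to\tilde\homo_0(R_i^{i+1})\leftarrow\tilde\homo_0(R_{i+1}^{i+1})$ then match the $\homo_1$-level maps $[\partial F]\mapsto[\partial F_1]+[\partial F_2]$ on the $|\K|$-side.

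The main obstacle is making the commutation rigorous in the merging case, where two distinct primary classes at $s_i$ fuse to one at $s_{i+1}$: the algorithm does not distinguish which class ``dies,'' so one must argue that the symmetric component-merge in $R$ still realizes the correct quotient of $\homo_1(|\K|)$. This is exactly where reduced $\tilde\homo_0$ and the threading become essential: the thread ensures an open-ended death actually kills the correct $\tilde\homo_0$ class, and the merge vertex of $R$ realizes the coequalizer of the two relevant inclusions, mirroring the corresponding quotient on the $|\K|$-side. Once this subcase is verified, assembling the local $\phi_i$ into an isomorphism of the two full zigzag modules is immediate, proving $\homo_1(\lev(z,|\K|))\cong\tilde\homo_0(\lev(z,R))$.
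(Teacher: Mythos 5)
Your pointwise identification of primary cycles with non-thread components of $R$ is the right invariant to start from, but the core of your plan---assembling the $\phi_i$ into a levelwise commuting ladder between $\homo_1(\lev(z,|\K|))$ and $\tilde\homo_0(\lev(z,R))$---cannot be carried out as stated, and this is exactly the point the paper has to work around. The critical values of $z$ on $R$ are the intermediate levels $s_1<\cdots<s_{m-1}$, whereas those on $|\K|$ are the $a_i$. Consequently, in the module for $\K$ the spaces at heights $s_i$ are level sets (sources of the inclusion-induced maps) and the spaces at heights $a_i$ are interval sets (targets), while in the module for $R$ the roles are exchanged: by the paper's definition $R_i^{i+1}=R_{[a_i,a_{i+1}]}$ brackets only the vertex level $s_i$ (not the two levels $s_i$ and $s_{i+1}$, as your bipartite-graph description assumes), and its arrows run from the $a$-indexed spaces into the $s$-indexed ones. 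So at every matched height the two arrows point in opposite directions; there is no square for your $\phi$'s to make commute, and the split/merge case analysis you propose never gets off the ground. Your proposed $\phi_i:\homo_1(\K_i^{i+1})\to\tilde\homo_0(R_i^{i+1})$ likewise pairs a space governed by the critical level $a_{i+1}$ of $\K$ with one governed by the critical level $s_i$ of $R$, another symptom of the same misalignment. The paper resolves this by dualizing the $R$ module, i.e., passing to reduced cohomology $\tilde\homo^0(\lev(z,R))$, which reverses its arrows; only then does the ladder of isomorphisms $\VecU^*_v\cong\VecV_v$ commute, and the final step uses that over a field dualization preserves the interval decomposition, giving $\tilde\homo^0(\lev(z,R))\cong\tilde\homo_0(\lev(z,R))$.

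To repair the argument you would need to insert this dualization (or an equivalent device, e.g., a proof that reversing all arrows of a finite-dimensional zigzag module preserves its barcode) \emph{before} attempting any commutation check, or else abandon the ladder and compare barcodes by another mechanism. As written, the proposal also leans on the claim that ``the merge vertex of $R$ realizes the coequalizer of the two relevant inclusions'' to dispose of the merging case, but that is an assertion of precisely what must be proved, and without first fixing the direction mismatch there is no diagram in which such a statement can even be checked.
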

\begin{proof}
Because of our construction and the threading, the homology group (vector space) of an interval set bracketing a single vertex either in $R$ or in $\K$ is isomorphic to that of the level set at the value of the vertex, that is,  $\tilde \homo_0(R_i^{i+1})=\tilde\homo_0(z^{-1}(s_i))$ and $\homo_1(\K_i^{i+1})=\homo_1(z^{-1}(a_{i+1}))$.
Writing the vector spaces for $R$ as $\tilde\homo_0(z^{-1}(v))=\VecU_{v}$ and the vector spaces for $\K$ as $\homo_1(z^{-1}(v))=\VecV_{v}$, and observing that $\VecU_{v}\cong \VecV_{v}$ for $v=a_i \mbox{ or } s_i$, we obtain the following diagram:
$$
	\xymatrix@C=3ex
	{
		\homo_1(\lev(z,\K)):
		&\VecV_{s_0} \ar[r] \ar@{=}[d]
		& \VecV_{a_1}  \ar@{=}[d]
		& \VecV_{s_1} \ar@{=}[d]\ar[l]
		&\cdots \ar[r] 
		%& \homo_0(\K_1^2) \cdots\ar@{->}[r] \ar@{=}[d]
		& \VecV_{a_m}\ar@{=}[d]
		& \VecV_{s_m} \ar[l]\ar@{=}[d]
		&
		\\
		\tilde\homo_0(\lev(z,R)):
		&\VecU_{s_0} 
		& \VecU_{a_1}\ar[l] \ar[r]
		& \VecU_{s_1} 
		& \cdots
		%& \homo_0(\R_1^2) \cdots\ar@{->}[r]
		& \VecU_{a_m}\ar[l]\ar[r]
		& \VecU_{s_m}
		&
	}
	$$
	To make the above diagram commute, we reverse the arrows for one of the modules, say $\tilde\homo_0(\lev(z,R))$ by considering the dual module on the dual vector spaces:
	$$
	\xymatrix@C=3ex
	{
		\homo_1(\lev(z,|\K|)):
		&\VecV_{s_0} \ar[r] \ar@{=}[d]
		& \VecV_{a_1}  \ar@{=}[d]
		& \VecV_{s_1} \ar@{=}[d]\ar[l]
		&\cdots \ar[r] 
		%& \homo_0(\K_1^2) \cdots\ar@{->}[r] \ar@{=}[d]
		& \VecV_{a_m}\ar@{=}[d]
		& \VecV_{s_m} \ar[l]\ar@{=}[d]
		&
		\\
		\tilde\homo^0(\lev(z,R)):
		&\VecU^*_{s_0} \ar[r]
		& \VecU^*_{a_1} 
		& \VecU^*_{s_1} \ar[l]
		& \cdots \ar[r]
		%& \homo_0(\R_1^2) \cdots\ar@{->}[r]
		& \VecU^*_{a_m}
		& \VecU^*_{s_m}\ar[l]
		&
	}
	$$
The above diagram commutes and thus $\homo_1(\lev(z,|\K|))\cong \tilde\homo^0(\lev(z,R))$. By duality, $\tilde\homo^0(\lev(z,R))\cong\tilde\homo_0(\lev(z,R))$ establishing the claim.	
\end{proof}
\begin{figure}[ht!]
	\begin{center}
		\includegraphics[width=0.95\textwidth]{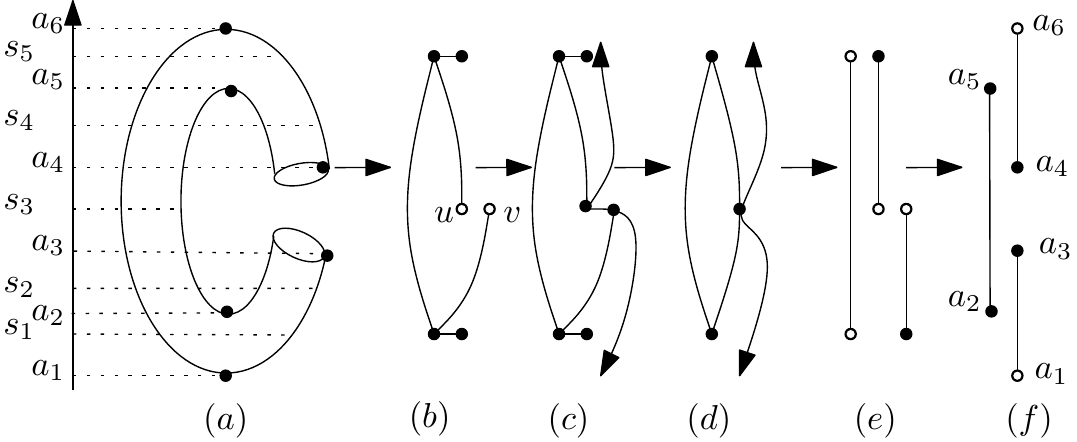}
	\end{center}
	\caption{(a) Illustration of stages of barcode extraction on a torus with a cylinder taken out: six critical values $a_1$-$a_6$ for $\homo_1$; (b) barcode graph; top two and bottom two vertices have the same values $s_5$ and $s_1$ respectively because of which they are drawn at the same levels; (c) threading connects the two open vertices at level $s_3$; (d) all vertices in the same connected component of a  level are coalesced (here $s_5, s_3, s_1$); (e) extracted bar code; (f) modifying the bars to bring their ends at the critical levels while reversing their types.}
	\label{barcode-graph}
\end{figure}
\subsection{Extracting bars}
\label{barcode-sec}
We apply the procedure of Agarwal et al.~\cite{AEHW06} for extracting the bars out of a barcode graph that they compute for surfaces without boundary in $\mathbb{R}^3$. Using the mergeable tree data structure of~\cite{tarjan}, this algorithm runs in $O(n\log n)$ time where $R$ has a total of $n$ edges and vertices. This algorithm in a sense mimics the definition of persistence
pairs in different diagrams based on their types as elucidated in~\cite{BCE12}.

Figure~\ref{barcode-graph} shows
the sequence of operations applied to the barcode graph of a torus with a cylinder taken out to extract the bars. Applying the barcode extraction algorithm of~\cite{AEHW06} straightforwardly on the barcode graph provides a wrong answer if threading is not done. The barcode graph (Figure~\ref{barcode-graph}(b)) 
is threaded first which may create additional cycles (Figure~\ref{barcode-graph}(c)).
Next, all vertices in a connected component of a level are contracted to a single vertex (Figure~\ref{barcode-graph}(d)). We apply the 
algorithm of~\cite{AEHW06} on this graph whose output are shown in Figure~\ref{barcode-graph}(e).

%\parpic[r]{\includegraphics[height=4cm]{barcode-extract.pdf}}

\vspace{0.1in}
\noindent
{\it Modifying the bars:} The algorithm of ~\cite{AEHW06} can be viewed as successively peeling off paths from the graph $R$ with endpoints at values $s_i$, $i=1,\ldots,m-1$ since $R$ has vertices at these levels only. An endpoint of a bar which is peeled from a split or a merge vertex $v$ is necessarily open (see bottom  and top vertices of the bar between $s_1$ and $s_5$ in Figure~\ref{barcode-graph}(e)). The last copy of $v$ that remains after all bars are peeled off of it remains to be closed (top and bottom vertices of the two shorter bars in Figure~\ref{barcode-graph}(e)). This accounts for the fact that a vertex contributes only a single component at its level. The other endpoints that arise from degree-1 vertices become open or closed according to the type of the vertex. After extracting all bars, we need to move their endpoints to the values $a_i$, $i=1,\ldots,m$ because the actual bars for $\K$ have endpoints at the critical values. 

Analogous to converting interval modules to bring their endpoints at critical values, we deploy the following conversions:
(i) $(s_i,s_j)\Rightarrow [a_{i+1},a_{j-1}]$, (ii) $[s_i,s_j)\Rightarrow (a_{i-1},a_{j-1}]$, (iii) $(s_i,s_j]\Rightarrow [a_{i+1},a_{j+1})$, (iv) $[s_i,s_j]\Rightarrow (a_{i-1},a_{j+1})$ where open and closed brackets indicate the open and closed ends respectively. See Figure~\ref{barcode-graph}(f).

\section{Reeb graph, barcode, and generators}
%Recall that the homology group $\homo_i(\K)$ is the direct
%sum of infinite interval modules or infinite 
%barcodes of the form $[a_i,\infty)$
%in the sublevel set persistence. Each of these infinite barcodes
%in turn has a one-to-one
%correspondence with an open-open barcode
%$(a_j,a_i)$ for $\homo_{i-1}$ or a closed-closed barcode $[a_i,a_j]$
%for $\homo_i$ in the level set persistence, see Theorem~\ref{}. 
Recall that $\homo_p(|\K|)\cong\opbarhom_{p-1}(z,|\K|)\oplus\clbarhom_p(z,|\K|)$
where $\opbarhom_{p-1}(z,|\K|)$ is generated by the 
open-open bars in $\homo_{p-1}(\lev(z,|\K|))$
and $\clbarhom_p(z,|\K|)$ is generated by the  
closed-closed bars in $\homo_p(\lev(z,|\K|))$~\cite{BD13}.
Our algorithm in the previous section produces the bars
for the level set module $\homo_1(\lev(z,|\K|))$ which allows us
to obtain the closed-open and closed-closed bars for 
the sublevel set module (standard persistence) $\homo_1(\sublev(z,|\K|))$
and the open-open bars for $\homo_2(\sublev(z,|\K|))$.
Although this completes the barcode for the second homology $\homo_2$,
we still need to compute the open-open bars for $\homo_0(\lev(z,|\K|))$
to complete the barcode for $\homo_1(\sublev(z,|\K|))$.
We achieve this with the help of Reeb graphs.

\vspace{0.1in}
\noindent
{\bf Reeb graphs and barcodes.}
Given a continuous function $f:|\K|\rightarrow \mathbb{R}$, one 
defines the Reeb graph $\reeb_f(|\K|)$ as the quotient space $|\K|\!\sim$
under the equivalence relation $\sim$  where for any
pair $x,y\in |\K|\times|\K|$, $x\sim y$ if and only if $f(x)=f(y)$ and
the level set $f^{-1}(f(x))$ contains $x$ and $y$ in the same connected
component. We observe the following connection:

\begin{proposition}
	$\opbarhom_0(z,|\K|)\cong \opbarhom_0(z,\reeb_z(|\K|))\cong \homo_1(\reeb_z(|\K|))$.
	\label{reeb-connection}
\end{proposition}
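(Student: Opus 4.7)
The plan is to prove the two isomorphisms separately. For the first, I would exploit the defining property of the Reeb quotient map $q:|\K|\to \reeb_z(|\K|)$: its fibers are precisely the path components of the level sets $z^{-1}(r)$. Consequently, for every value $r$, the induced map $\homo_0(z^{-1}(r))\to \homo_0(q(z^{-1}(r)))$ is an isomorphism, since $q$ collapses each path component to a single point. The analogous statement for interval sets $|\K|_{[a,b]}$ also holds: the surjection from path components of $|\K|_{[a,b]}$ to those of $q(|\K|_{[a,b]})$ is a bijection, because any path in $\reeb_z(|\K|)$ can be lifted to $|\K|$ using the path-connectedness of fibers of $q$ together with the local product structure of the tame function $z$. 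Hence $q$ induces a natural isomorphism of the entire zigzag modules $\homo_0(\lev(z,|\K|))\cong \homo_0(\lev(z,\reeb_z(|\K|)))$, and this isomorphism preserves the open/closed endpoint types of the interval-module summands. In particular the open-open summands match, giving $\opbarhom_0(z,|\K|)\cong \opbarhom_0(z,\reeb_z(|\K|))$.

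For the second isomorphism, I would apply the decomposition of~\cite{BD13} already cited in the excerpt,
\[
\homo_p(\tspace)\cong \opbarhom_{p-1}(f,\tspace)\oplus \clbarhom_p(f,\tspace),
\]
to the pair $(\tspace,f)=(\reeb_z(|\K|),z)$ with $p=1$. This yields
\[
\homo_1(\reeb_z(|\K|))\cong \opbarhom_0(z,\reeb_z(|\K|))\oplus \clbarhom_1(z,\reeb_z(|\K|)).
\]
The key observation is that $\reeb_z(|\K|)$ is a $1$-dimensional CW complex (a graph), so every level set $z^{-1}(r)$ in $\reeb_z(|\K|)$ is a finite collection of points. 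Hence $\homo_1$ of every level set and every interval set of the Reeb graph vanishes, the whole zigzag module $\homo_1(\lev(z,\reeb_z(|\K|)))$ is trivial, and in particular $\clbarhom_1(z,\reeb_z(|\K|))=0$. Substituting this into the decomposition gives $\homo_1(\reeb_z(|\K|))\cong \opbarhom_0(z,\reeb_z(|\K|))$, completing the chain of isomorphisms.

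The step I expect to be delicate is the first one, specifically showing that $q$ induces a bijection on path components of arbitrary interval sets (rather than just level sets). This is where one has to be careful to use that $z$ is tame and that the level sets decompose $|\K|$ in a product-like fashion over non-critical intervals, so that any path in $\reeb_z(|\K|)$ can be realized as the image of a path in $|\K|$ by concatenating in-fiber pieces (which exist by path-connectedness of fibers) with transverse pieces lifted through the local product neighborhoods. The remaining ingredients—invoking the Burghelea--Dey decomposition and the vanishing of $\homo_1$ on finite point sets—are then immediate.
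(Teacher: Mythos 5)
Your proposal is correct and follows essentially the same route as the paper's proof: it kills $\clbarhom_1(z,\reeb_z(|\K|))$ in the decomposition $\homo_1(\reeb_z(|\K|))\cong \opbarhom_0(z,\reeb_z(|\K|))\oplus\clbarhom_1(z,\reeb_z(|\K|))$ by observing that level sets of the Reeb graph are discrete, and it obtains $\opbarhom_0(z,|\K|)\cong\opbarhom_0(z,\reeb_z(|\K|))$ by showing the quotient map induces an isomorphism of the $\homo_0$ level-set zigzag modules (the paper phrases this via tameness and a commutative diagram rather than explicit path lifting, and argues the discreteness via an embedding of the Reeb graph in $\mathbb{R}^3$, but these are cosmetic differences).
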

\begin{proof}
	By Theorem~\ref{main-thm}, $\homo_1(\reeb_z(|\K|))$ is isomorphic to the 
	direct sum
	$\opbarhom_0(z,\reeb_z(|\K|))\oplus\clbarhom_1(z,\reeb_z(|\K|))$.
	Consider an
	embedding that takes each vertex $v$ of $\reeb_z(|\K|)$ to points
	in $\mathbb{R}^3$ with the $z$-coordinate equaling the $z$-value
	of its pre-image in $|\K|$.
	Recall that an embedding necessarily maps distinct vertices to distinct
	points even if they have same $z$-values.
	An edge $(v_i,v_j)$ is embedded as the line segment joining $v_i$ and $v_j$.
	Observe that, because of the assumption that the height function
	$z$ is proper for $|\K|$, the embedded Reeb graph, also
	denoted $\reeb_z(|\K|)$ for simplicity, does not have any edge connecting
	two vertices with the same $z$-value. 
	So, no edge lies entirely on any level set $z^{-1}(r)$ for any $r\in\mathbb{R}$.
	Assuming general position, no two edges cross. 
	
	A level set $z^{-1}(r)$ for $\reeb_z(|\K|)$ has only isolated
	points where the edges intersect the plane $\pi_r:z=r$ transversely. 
	Thus, there is no closed-closed bar in $\homo_1(\lev(z,\reeb_z(|\K|))$
	and hence the summand group $\clbarhom_1(z,\reeb_z(|\K|))$ is trivial.
	It follows that 
	\begin{equation}
	\homo_1(\reeb_z(|\K|))\cong \opbarhom_0(z,\reeb_z(|\K|)).
	\label{reeb-eq}
	\end{equation}
	
	It follows from the definition of the Reeb graph that the $0$-dimensional
	homology group of any level set $z^{-1}(r)$, $r\in \mathbb{R}$,
	for $|\K|$ and $\reeb_z(|\K|)$ are isomorphic. Because of tameness
	of $z$, the same can be concluded for
	corresponding interval sets. Therefore, using our
	notation for interval sets between consecutive
	non-critical values, $\K_i^j=|\K|_{[s_i,s_j]}$ and
	$\R_i^j=\reeb_z(|\K|)_{[s_i,s_j]}$, we have the
	following commutative diagram between the $0$-dimensional level
	set zigzag persistence modules. 
	
	$$
	\xymatrix@C=3ex
	{
		\homo_0(\lev(z,|\K|)):
		&\homo_0(\K_0^0) \ar[r] \ar@{=}[d]
		& \homo_0(\K_0^1)  \ar@{=}[d]
		& \homo_0(\K_1^1) \ar[l] \cdots\ar[r] \ar@{=}[d]
		%& \homo_0(\K_1^2) \cdots\ar@{->}[r] \ar@{=}[d]
		& \homo_0(\K_{m-1}^m)\ar@{=}[d]
		& \homo_0(\K_m^m) \ar[l]\ar@{=}[d]
		&
		\\
		\homo_0(\lev(z,\reeb_z(|\K|))):
		&\homo_0(\R_0^0) \ar[r]
		& \homo_0(\R_0^1) 
		& \homo_0(\R_1^1) \ar[l] \cdots\ar[r]
		%& \homo_0(\R_1^2) \cdots\ar@{->}[r]
		& \homo_0(\R_{m-1}^m)
		& \homo_0(\R_m^m) \ar[l]
		&
	}
	$$
	
	It follows that the two zigzag persistence modules $\homo_0(\lev(z,|\K|))$
	and $\homo_0(\lev(z,\reeb_z(|\K|)))$ are isomorphic. Therefore,
	$\opbarhom_0(z,|\K|)\cong\opbarhom_0((z,\reeb_z(|\K|))$.
	Combining it with~\ref{reeb-eq} we get the claim. 
\end{proof}

\vspace{0.1in}
\noindent
{\bf Computing $\homo_1$-generators.}
Since $\homo_1(|\K|)\cong\opbarhom_{0}(z,|\K|)\oplus\clbarhom_1(z,|\K|)$,
we are required to generate a set of cycles whose classes 
form a basis for $\clbarhom_1(z,|\K|)$ and another set
of cycles whose classes form a basis for $\opbarhom_0(z,|\K|)$.
A closed-closed bar $[a_i,a_j]$ in $\homo_1(\lev(z,|\K|))$
is initiated by a cycle $C$ at the level $z(v_i)=a_i$ for some vertex
$v_i\in \K$. The homology class $[C]$ can be traced at each level set in the 
interval $[a_i,a_j]$ through the images and inverse images
of the inclusion maps that produce the zigzag level set
module $\homo_1(\lev(z,|\K|))$. In other words, the class 
$[C]\in \homo_1(|\K|)$ represents
the bar $[a_i,a_j]$. Therefore, the classes of cycles
initiating closed-closed bar in the level set persistence module
$\homo_1(\lev(z,|\K|))$ generate the summand $\clbarhom_1(z,|\K|)$ of $\homo_1(|\K|)$.

We compute the cycles initiating the closed-closed bars as 
follows. A bar with a closed end results from a split that occurs
during updating $\overrightarrow{G}_{s_{i-1}}$ to $\overrightarrow{G}_{a_i}$. 
A new primary cycle $C$ is born in both cases of the split,
which in turn initiates a new edge, say $e$ in the barcode graph $R$. 
We can keep the cycle $C$ associated with $e$ in $R$.
After we extract all bars
from the barcode graph, we can determine the closed-closed bars
and determine the cycles associated with their initiating edges. 
The drawback of this approach
is that we may store many unnecessary cycles that initiate closed-open
bars. To avoid this, we do not store the entire cycle beforehand
initiating a bar from a split. Instead, we store one directed
edge $d$ in the cycle $C$ associated with the edge $e\in R$. We also
remember the vertex of $\K$ where the split has occurred. 
After we extract a closed-closed
bar from the barcode graph, we obtain its associated directed edge $d$
and the vertex $v\in \K$, and trace out the cycle containing $d$ in 
the level set graph $\overrightarrow{G}_{z(v)}$. Taking into account the time to create $R$
and the time to extract the cycles, this process cannot take 
more than $O(n\log n +k)$ time for all cycles to be output 
where $k$ is their total size.

To compute the generating cycles
for the summand $\opbarhom_0(z,|\K|)$ of $\homo_1(|\K|)$, we 
use the second equivalence in Proposition~\ref{reeb-connection}.
It implies that if a cycle basis for the Reeb graph is mapped injectively to
a sub-basis of $\homo_1(|\K|)$, then that sub-basis indeed generate
the summand $\opbarhom_0(z,|\K|)$. 

A cycle basis for $\homo_1(\reeb_z(|\K|))$ can be computed in linear time
by computing a spanning tree of $\reeb_z(|\K|)$ treating it as a graph and then generating a cycle
for each additional edge not in the spanning tree. The pre-image of these
basis cycles w.r.t. the surjective map $\phi: |\K|\rightarrow \reeb_z(|\K|)$
can be computed again in time linear in the total size of the basis cycles
and the Reeb graph. These pre-images can also be deformed with a homotopy
to the $1$-skeleton
of $\K$. The pre-images thus constructed form a cycle basis
of the so called {\em vertical homology} group which is
a summand group of $\homo_1(|\K|)$~\cite{DW13}. Therefore, the pre-images
form a cycle basis of $\opbarhom_0(z,|\K|)$ which
can be computed in $O(n\log n + k)$ time where $k$ is the total size of all 
such cycles. 

\vspace{0.1in}
\noindent
{\bf Computing $\homo_2$-generators.}
We know $\homo_2(|\K|)\cong \opbarhom_1(z,|\K|)\oplus\clbarhom_2(z,|\K|)$.
Since $|\K|\subset\mathbb{R}^3$, $\clbarhom_2(z,|\K|)$ is trivial
because there are no $2$-cycles in any level set. Therefore, a set of 
independent $2$-cycles in $|\K|$ that map bijectively to 
a basis of $\opbarhom_1(z,|\K|)$ form
a set of basis cycles for $\homo_2(|\K|)$.

Let $C$ be any primary cycle that initiates an open-open bar
extracted from the barcode graph $R$. As we already explained,
the component $Z$ of $R$ providing the open-open
bar in this case corresponds to a $2$-cycle in $|\K|$.
We can think $R$ as a $1$-complex linearly embedded
in $\mathbb{R}^3$ with a vertex $v\in R$ having height $z(v)$.
Then, there is a continuous surjective map $\xi: E\rightarrow R$
where $E\subseteq \mathbb{R}^3\setminus |\K|$ 
is the union of all faces bounded by
primary cycles over all levels $r\in\mathbb{R}$. In fact, $R$ is the
Reeb graph of the height function $z:E\rightarrow\mathbb{R}$.  An open-open
bar in $\homo_1(\lev(z,|\K|))$ signifies a non-trivial
class of $\homo_2(|\K|)$ by Theorem~\ref{main-thm}. 
The boundary of the inverse image $\xi^{-1}(Z)$ is a $2$-cycle in $\K$,
and it can be argued that it is independent of all such cycles.
We can extract this $2$-cycle by taking any complex triangle $t\in \K$ where
$d\cdot\tri=t$ for a directed edge $d\in C$, and then collecting all triangles
that bound the void whose boundary includes $t$.
This can be done by a simple depth-first walk in the adjacency data structure
of $\K$. In total, after $O(n)$-time walk, we collect all $2$-cycles
generating a basis for $\homo_2(|\K|)$.

\begin{newremark}
	We observe that all of the computations
	that we described can be adapted to the case
	when $\K$ is not necessarily pure.
	During the level set updates, 
	we do not let those cycles $C_{\overrightarrow{F}}$
	generate nodes and edges in the barcode graph $R$ 
	where the face $F$ (on right)
	lies on the intersection of the level set with tetrahedra.
	This is because $C_F$ cannot be a cycle basis element of the
	level set graph containing $C_F$.
	The edges and vertices of $\K$ that are 
	not adjacent to any triangle do not affect the
	computation of the level set zigzag persistence because
	they do not contribute to any
	primary cycle. However, they appear in the
	Reeb graph computation and may contribute to the
	open-open bars and hence infinite bars for
	the first homology.
\end{newremark}
\begin{newremark}
	Because of Proposition~\ref{secondary},
	if we run our level set update algorithm with 
	the roles of primary and secondary cycles switched, we obtain the
	Reeb graph $\reeb_z(|\K|)$ as the barcode graph. This gives
	an alternate $O(n\log n)$-time algorithm for computing Reeb graphs.
\end{newremark}

\vspace{0.1in}
\noindent
{\it Proof of Theorem~\ref{main1-thm}}: We have already presented an $O(n\log n)$ time algorithm for computing $\homo_1(\lev(z,|\K|))$. Notice that since level sets reside in planes, $\homo_i(\lev(z,|\K|))$ is trivial for every $i>1$. To compute $\homo_0(\lev(z,|\K|))$, we need to compute the connected components for level sets and track them. The barcode graph in this case is exactly the Reeb graph which can be computed in $O(n\log n)$ time by the algorithm of Parsa~\cite{Parsa}. The barcode from this graph can be extracted as before.
The basis cycles for $\homo_1(\K)$ and $\homo_2(\K)$ can be computed in $O(n\log n +k)$ time as described before. The case for $\homo_0(\K)$ is trivial because connected components of $\K$ can be determined in linear time by a depth first serach in the $1$-skeleton of $\K$.

\vspace{0.1in}
\noindent
{\it Proof of Theorem~\ref{main2-thm}}: Because of Theorem~\ref{main-thm}, all bars of $\homo_1(\sublev(z,|\K|))$ except the infinite bars that correspond to the open-open bars of $\homo_0(\lev(z,|\K|))$ can be obtained by computing $\homo_1(\lev(z,|\K|))$. We can use the first equivalence in Proposition~\ref{reeb-connection}
to derive the infinite bars of $\homo_1(\sublev(z,|\K|))$
that correspond to the open-open bars of $\homo_0(\lev(z,|\K|))$.
We compute the Reeb graph $\reeb_z(|\K|)$ in $O(n\log n)$ time using the
algorithm by Parsa~\cite{Parsa} and then extract the open-open bars
of $\homo_0(z,\reeb_z(|\K|))$ by running the $O(n\log n)$-time extended
persistence algorithm of Agarwal et al.~\cite{AEHW06} on it.

Since the level sets of $z$ on $|\K|$ resides on planes, $\homo_2(\lev(z,|\K|))$ is trivial.
Hence, there is no finite bars in the standard persistence $\homo_2(\sublev(z,|\K|))$. Also, for the same reason, the infinite bars of $\homo_2(\sublev(z,|\K|))$ correspond only to the open-open bars of $\homo_1(\lev(z,|\K|))$ which can be computed in $O(n\log n)$ time as we have described already. For $\homo_0(\sublev(z,|\K|))$, we can compute the closed-open bars and closed-closed bars of $\homo_0(\lev(z,|\K|))$ to obtain the finite and infinte bars respectively. We have already mentioned that this can be done by computing the Reeb graph and extracting bars from it in $O(n\log n)$ time.

\section{Discussions}
This work has spawned some interesting questions. 
The foremost among them is perhaps
the question of being able to extend the presented approach toward computing the
general persistence for simplicial complexes $\K$ embedded in $\mathbb{R}^3$ while maintaining an $O(n\log n)$ time complexity. In this case, the level sets are embedded on surfaces of possibly high genus that can themselves change topology as sweep proceeds. It is not clear how to track a basis efficiently in this case. 
One possibility is to look for functions that can be transformed into
height or height-like functions. Our approach applies to
functions on $|\K|$ that can be extended continuously
to entire $\mathbb{R}^3$ with level sets being points, planes, or spheres. The height
function on $|\K|$ is one such function whose extension to entire $\mathbb{R}^3$ has such level sets which are merely planes.

%What about extending our approach to complexes in higher dimensions, say $2$-complexes in $\mathbb{R}^4$? There are two roadblocks. First, the level sets are now general graphs embedded in $\mathbb{R}^3$. It is not clear how to compute the barcode graph efficiently in this context. Second, we can no more take care of the closed nodes of the barcode graphs with the thread representing the unbounded component since, in $\mathbb{R}^3$, $\homo_1(\cdot)$ is no more isomorphic to $\tilde\homo_0(\cdot)$ of the complement.

With our approach we can compute the homology generators in 
$O(n\log n +k)$ time. With these
generators, is it possible to compute the co-homology generators as well
efficiently? This will allow annotating the simplices in the sense
of~\cite{BCCDW12} so that the homology class of any given cycle can be
determined efficiently. One can compute a cohomology basis or an annotation
from scratch in matrix multiplication time~\cite{BCCDW12}, 
but can we leverage
the fact that a homology basis is already available? 

We have exploited the embedding of a complex to compute the persistence
efficiently for a special class of PL functions. Can we do the same for 
some interesting class of filtrations; or, even for a sequence of embedded
complexes connected with simplicial maps? Finally, can our approach be extended to dimensions beyond $\mathbb{R}^3$ which will imply breaking through the current matrix multiplication time barrier for computing persistence in general?
%After finalizing this paper, the author has discovered recently that the technique indeed can be extended to general filtrations albeit with some non-trivial observations while worsening the time complexity to $O(n^2\log n)$~\cite{Dey18}. This new paper also explains the relation of the
%barcode graph to the level set persistence in a more general context.

\vspace{0.1in}
\noindent
{\bf Acknowledgment.} The author acknowledges the support of the NSF grants CCF-1740761, CCF-1526513, and DMS-1547357 for this research.

\end{document}